\newcommand{\ignore}[1]{}
\newcommand{\ADS}{\emph{ADS}}
\newcommand{\Runtime}{\emph{Runtime}}
\newcommand{\eqdef}{\stackrel{\mathit{def}}{=}}
\newcommand{\Reals}{\mathbb{R}}
\newcommand{\PosReals}{\Reals_{\ge 0}}
\newcommand{\Segments}{{\cal S}}
\newcommand{\segnorm}[1]{|\!| #1 |\!|}
\newcommand{\subseg}[3]{{#1}{[#2,#3]}}
\newcommand{\pre}[1]{{^\bullet\!{#1}}}
\newcommand{\post}[1]{{{#1}^\bullet}}
\newcommand{\segment}[1]{{#1}.s}
\newcommand{\Pos}[1]{\mathit{Pos}_{#1}}
\newcommand{\edgeride}[4]{{#2} \xrightarrow{{#3}}_{#1} {#4}}
\newcommand{\ride}[4]{{#2} \stackrel{#3}\rightsquigarrow_{#1} {#4}}
\newcommand{\distance}[1]{d_{#1}}
\newcommand{\priority}[1]{\prec_{#1}}
\newcommand{\att}[2]{#1.\mathit{#2}}
\newcommand{\po}[1]{\att{#1}{p}}
\newcommand{\dt}[1]{\att{#1}{\delta}}
\newcommand{\speed}[1]{\att{#1}{v}}
\newcommand{\itn}[1]{\att{#1}{it}}
\newcommand{\fs}[1]{\att{#1}{f}}
\newcommand{\limit}[1]{\att{#1}{\pi}}
\newcommand{\wt}[1]{\att{#1}{wt}}
\newcommand{\Vehicles}{{\cal C}}
\newcommand{\Objects}{{\cal O}}
\newcommand{\Brake}{B}
\newcommand{\Ahead}[2]{\mathit{Ahead}_{#1}(#2)}
\newcommand{\ahead}[2]{\mathit{ahead}_{#1}(#2)}
\newcommand{\edgemeets}[2]{\edgeride{}{{#1}}{}{{#2}} }
\newcommand{\meets}[2]{\ride{}{{#1}}{}{{#2}}}
\newcommand{\at}[2]{ {#1} \mathit{@} {#2} }
\newcommand{\capacity}[1]{\mathit{w}({#1})}
\newcommand{\always}{\Box}
\newcommand{\next}{\mathsf{N}~}
\newcommand{\cl}[1]{\swarrow\!\!\!{#1}}
\begin{document}

\title{Correct by Design Coordination of Autonomous Driving Systems}

\author{Marius Bozga\orcidID{0000-0003-4412-5684} \and
Joseph Sifakis\orcidID{0000-0003-2447-7981}}

\authorrunning{M. Bozga, J. Sifakis}

\institute{Univ. Grenoble Alpes, CNRS, Grenoble
  INP\footnote{Institute of Engineering Univ. Grenoble Alpes},
  VERIMAG, 38000 Grenoble, France
  \email{\{Marius.Bozga,Joseph.Sifakis\}@univ-grenoble-alpes.fr}\\
  \url{http://www-verimag.imag.fr/}}

\maketitle

\begin{abstract}
The paper proposes a method for the correct by design coordination of
autonomous driving systems (\ADS). It builds on previous results on
collision avoidance policies and the modeling of \ADS\ by combining
descriptions of their static environment in the form of maps, and the
dynamic behavior of their vehicles.
  
An \ADS\ is modeled as a dynamic system involving a set of vehicles
coordinated by a \Runtime\ that based on vehicle positions on a map and
their kinetic attributes, computes free spaces for each vehicle. Vehicles
are bounded to move within the corresponding allocated free spaces.
  
We provide a correct by design safe control policy for an \ADS\ if its
vehicles and the \Runtime\ respect corresponding assume-guarantee
contracts. The result is established by showing that the composition of
assume-guarantee contracts is an inductive invariant that entails
\ADS\ safety.
  
We show that it is practically possible to define speed control policies
for vehicles that comply with their contracts.  Furthermore, we show that
traffic rules can be specified in a linear-time temporal logic, as a class
of formulas that constrain vehicle speeds. The main result is that, given a
set of traffic rules, it is possible to derive free space policies of the
\Runtime\ such that the resulting system behavior is safe by design with
respect to the rules.

\keywords{Autonomous driving systems \and Traffic rule specification \and
  Map specification \and Collision avoidance policy \and Assume-guarantee
  contract \and Correctness by design.}
\end{abstract}

\section{Introduction}
Autonomous driving systems (\ADS) are probably the most difficult
systems to design and validate, because the behavior of their agents
is subject to temporal and spatial dynamism. They are real-time
distributed systems involving components with partial knowledge of
their environment, pursuing specific goals while the collective
behavior must meet given global goals.

Development of trustworthy \ADS\ is an urgent and critical need. It
poses challenges that go well beyond the current state of the art due
to their overwhelming complexity.  These challenges include, on the
one hand, modeling the system and specifying its properties, usually
expressed as traffic rules; on the other hand, building the system and
verifying its correctness with respect to the desired system
properties.

Modeling involves a variety of issues related to the inherent temporal and
spatial dynamics as well as to the need for an accurate representation of
the physical environment in which vehicles operate. Much work focuses on
formalizing and standardizing a concept of map that is central to semantic
awareness and decision-making. Maps should be semantically rich, i.e.,
their structure should provide all relevant semantic information such as
road type, junction type, signaling information, etc. Given their
importance for the modeling \ADS, maps have been the subject of many
studies focused on their compositional description and formalization. These
studies often use ontologies and logics with associated reasoning
mechanisms to check the consistency of descriptions and their accuracy with
respect to desired properties \cite{BeetzB18,BagschikMM18}. Other works
propose open source mapping frameworks for highly automated
driving \cite{OpenDRIVE-1.4,ASAMOpenDRIVE-1.6.0,PoggenhansPJONK18}.
Finally, the SOCA method \cite{ButzHHORSZ20} proposes an
abstraction of maps called zone graph, and uses this abstraction in a
morphological behavior analysis.

There is an extensive literature on \ADS\ validation that involves two
interrelated problems: the specification of system properties and the
application of validation techniques.  The specification of properties
requires first-order temporal logics because parameterization and
genericity are essential for the description of situations involving a
varying number of vehicles and types of traffic patterns.  The work
in \cite{RizaldiKHFIAHN17,RizaldiA15} formalizes a set of traffic
rules for highway scenarios in Isabelle/HOL. It shows that traffic
rules can be used as requirements to be met by autonomous vehicles and
proposes a verification procedure.  A formalization of traffic rules
for uncontrolled intersections is provided in \cite{KarimiD20}, which
shows how the rules can be used by a simulator to safely control
traffic at intersections. The work in \cite{EsterleGK20} proposes a
methodology for formalizing traffic rules in linear temporal logic; it
shows how the evaluation of formalized rules on recorded human
behaviors provides insight into how well drivers follow the rules.

Many works deal with the formal verification of controllers that
perform specific maneuvers.  For example, in \cite{HilscherLOR11}, a
dedicated multi-way spatial logic inspired by interval temporal logic
is used to specify safety and provide proofs for lane change
controllers. The work in \cite{RizaldiISA18} presents a formally
verified motion planner in Isabelle/HOL. The planner uses maneuver
automata, a variant of hybrid automata, and linear temporal logic to
express properties. In \cite{EsterleGK20}, runtime verification is
applied to check that the maneuvers of a high-level planner conform to
traffic rules expressed in linear temporal logic.

Of particular interest for this work are correct by construction
techniques where system construction is guided by a set of properties
that the system is guaranteed to satisfy. They involve either the
application of monolithic synthesis techniques or compositional
reasoning throughout a component-based system design process. There is
considerable work on controller synthesis from a set of system
properties usually expressed in linear temporal logic, see for example
\cite{Kress-GazitP08,WongpiromsarnKF11,WongpiromsarnTM12,SchwartingAR18,arxiv.2203.14110}. These are algorithmic
techniques extensively studied in the field of control. They consist of
restricting the controllable behavior of a system interacting with its
environment so that a set of properties are satisfied. Nonetheless, their
application is limited due to their high computational cost, which depends
in particular on the type of properties and the complexity of the system
behavior.

An alternative to synthesis is to achieve correctness by design as a
result of composing the properties of the system components.
Component properties are usually "assume-guarantee" contracts
characterizing a causal relationship between a component and its
environment: if the environment satisfies the "assume" part of the
contract, the state of the component will satisfy the "guarantee"
part, e.g. \cite{BenvenisteCNPRR18,Meyer92,ChatterjeeH07}.  The use of
contracts in system design involves a decomposition of overall system
requirements into contracts that provide a basis for more efficient
analysis and validation. In addition, contract-based design is
advocated as a method for achieving correctness by design, provided
that satisfactory implementations of the system can be
found \cite{abs-1909-02070}. There are a number of theoretical
frameworks that apply mainly to continuous or synchronous systems,
especially for analysis and verification
purposes \cite{abs-2012-12657,MavridouKGKPW21,SaoudGF21}.  They suffer
computational limitations because, in the general case, they involve the
symbolic solution of fixed-point equations, which restricts the
expressiveness of the contracts \cite{MavridouKGKPW21}. Furthermore,
they are only applicable to systems with a static architecture, which
excludes dynamic reconfigurable systems, such as autonomous
systems.

The paper builds on previous results \cite{abs-2109-06478} on a
logical framework for parametric specification of \ADS\ combining
models of the system’s static environment in the form of maps, and the
dynamic properties of its vehicles.  Maps are metric graphs whose
vertices represent locations and edges are labeled with segments that
can represent roads at different levels of abstraction, with
characteristics such as length or geometric features characterizing
their shape and size.

An \ADS\ model is a dynamic system consisting of a map and a set of
vehicles moving along specific routes. Its state can be conceived as
the distribution of vehicles on a map with their positions, speeds and
other kinematic attributes. For its movement, each vehicle has a safe
estimate of the free space in its neighborhood, according to
predefined visibility rules.  We assume that vehicle coordination is
performed by a \Runtime\ that, for given vehicle positions and speeds
on the map, can compute the free spaces on each vehicle's itinerary in
which it can safely move.

We consider without loss of generality, \ADS\ with a discretized
execution time step $\Delta t$. Knowing its free space, each vehicle
can move by adapting its speed in order to stay in this space, braking
if necessary in case of emergency. At the end of each cycle, taking
into account the movements of the vehicles, the \Runtime\ updates
their positions on the map. The cycle iterates by calculating the free
spaces from the new state.

We study a safe control policy for \ADS, which is correct by
design. It results from the combination of two types of
assume-guarantee contracts: one contract for each vehicle and another
contract for the \Runtime\ taking into account the positions of the
vehicles on the map.

The contract for a vehicle states that, assuming that initially the
dynamics of the vehicle allow it to stay in the allocated free space,
it will stay in this free space. Note that the details of the contract
implementation are irrelevant; only the I/O relationship between free
space and vehicle speed matters.  The \Runtime\ assume-guarantee
contract asserts that if the free spaces allocated to vehicles at the
beginning of a cycle are disjoint, then they can be allocated new
disjoint free spaces provided they have fulfilled their contract.  The
combination of these two contracts leads to a control policy that
satisfies an inductive invariant, implying system safety.

We build on this general result by specializing its application in two
directions.  First, we show that it is possible to define speed
policies for vehicles that satisfy their assume-guarantee
contract. Second, we show that it is possible to define free space
policies for the \Runtime\ enforcing safety constraints of a given set
of traffic rules. We formalize traffic rules as a class of properties of
a linear temporal logic. Each rule applicable to a vehicle $c$, is an
implication whose conclusion involves two types of constraints on the
speed of $c$: speed regulation and speed limitation constraints. We
provide a method that derives from a given set of traffic rules,
constraints on the free spaces chosen by the \Runtime\ such that the
resulting system behavior is safe with respect to these rules.
This is the main result of the paper establishing correctness by
design of general \ADS, provided that their components comply with
their respective contracts.

The paper is structured as follows.  In Section \ref{sec:maps}, we
establish the general framework by introducing the basic models and
concepts for the representation of maps. In
Section \ref{sec:dynamic-model}, we introduce the dynamic model of \ADS\
involving a set of vehicles and a \Runtime\ for their coordination. We show
how a correct by design safe control policy is obtained by combining
assume-guarantee contracts for the vehicles and the \Runtime.  In
Section \ref{sec:speed-policies}, we study the principle of speed policies
respecting the vehicle contract and show its application through an
example.  In Section \ref{sec:freespace-policies}, we formalize traffic
rules as a class of formulas of a linear temporal logic and show how it is
possible to generate from a set of traffic rules free space policies such
that the system is safe by design.  In Section \ref{sec:experiments}, we
briefly describe the implementation of the approach and experiments
underway. Section \ref{sec:discussion} concludes with a discussion of the
significance of the results, future developments and applications.

\section{Map Representation}\label{sec:maps}

Following the idea presented in \cite{abs-2109-06478}, we build contiguous
road segments from a set $\Segments$ equipped with a partial concatenation
operator $\cdot : \Segments \times \Segments{}
\rightarrow \Segments \cup \{\bot\}$, a length norm $\segnorm{.} :
\Segments \rightarrow \PosReals$ and a partial subsegment extraction
operator $\subseg{.}{.}{.}:\Segments \times \PosReals \times \PosReals
\rightarrow \Segments \cup \{\bot\}$.
Thus, given a segment $s$, $\segnorm{s}$ represents its length and
$\subseg{s}{a}{b}$ for $0 \le a<b \le \segnorm{s}$, represents the
sub-segment starting at length $a$ from its origin and ending at length
$b$.  Segments can be used to represent roads at different levels of
abstraction, from intervals to regions. As an example, we consider
$\Segments$ as the set of curves obtained by concatenation of parametric
line segments and circle arcs. More precisely, for any $a,
r\in \PosReals^*$, $\varphi\in
\Reals$, $\theta \in \Reals^*$ the curves $line[a,\varphi]$,
$arc[r,\varphi,\theta]$ are defined as
$$\begin{array}{rcl}
    line[a,\varphi](t) & \eqdef & (at\cos\varphi,at\sin\varphi) ~~\forall t\in[0,1]\\
    arc[r,\varphi,\theta](t) & \eqdef & (r(\sin (\varphi + t\theta) - \sin\varphi),
    r (-\cos(\varphi + t \theta) + \cos\varphi)) ~~\forall t\in [0,1] 
  \end{array}$$
Note that $a$ and $r$ are respectively the length of the line and the
radius of the arc, $\varphi$ is the slope of the curve at the initial
endpoint and $\theta$ is the degree of the arc.
Fig.~\ref{fig:ex-curves} illustrates the composition of three curves
of this parametric form.

  \begin{figure}[htbp]
    \vspace{-3mm}
    \begin{center}
      \input{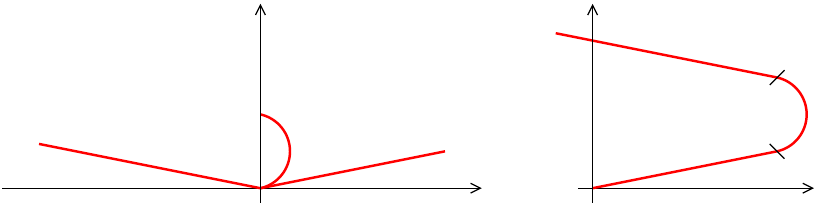_t}
    \end{center}
    \vspace{-3mm}
    \caption{\label{fig:ex-curves}{Curve segments and their composition}}
    \vspace{-3mm}
  \end{figure}

We use metric graphs $G \eqdef (V,\Segments,E)$ to represent maps, where
$V$ is a finite set of \emph{vertices}, $\Segments$ is a set of segments
and $E \subseteq V \times \Segments^\star \times V$ is a finite set of
\emph{edges} labeled by \emph{non-zero length}
segments (denoted $\Segments^\star$).  For an edge $e=(v,s,v') \in E$ we
denote $\pre{e} \eqdef v$, $\post{e} \eqdef v'$, $\segment{e} \eqdef s$.
For a vertex $v$, we define $\pre{v} \eqdef
\{ e ~|~ \post{e} = v\}$ and $\post{v} \eqdef \{e ~|~ \pre{e} = v\}$.  We
call a metric graph \emph{connected} (resp. \emph{weakly connected}) if a
path (resp. an undirected path) exists between any pair of vertices.

We consider the set $\Pos{G} \eqdef V \cup \{(e,a) ~|~ e \in E,~ 0 \le a
\le \segnorm{\segment{e}}\}$ of \emph{positions} defined by a metric graph.
Note that positions $(e,0)$ and $(e, \segnorm{\segment{e}})$ are considered
equal respectively to positions $\pre{e}$ and $\post{e}$.  We denote by
$\edgeride{G}{p}{s}{p'}$ the existence of an $s$-labelled \emph{edge ride}
between succeeding positions $p=(e,a)$ and $p'=(e,a')$ in the same edge $e$
whenever $0 \le a < a' \le \segnorm{\segment{e}}$ and $s =
\subseg{\segment{e}}{a}{a'}$.  Moreover, we denote by $\ride{G}{p}{s}{p'}$
the existence of an $s$-labelled \emph{ride} between arbitrary positions
$p$, $p'$, that is, $\ride{G}{}{}{} \eqdef (\edgeride{G}{}{}{})^+$ the
transitive closure of edge rides.  Finally, we define the distance
$\distance{G}$ from position $p$ to position $p'$ as 0 whenever $p = p'$ or
the minimum length among all segments labeling rides from $p$ to $p'$ and
otherwise $+\infty$ if no such ride exists.  Whenever $G$ is fixed in the
context, we will omit the subscript $G$ for positions $\Pos{G}$, distance
$\distance{G}$, and rides $\edgeride{G}{}{}{}$ or $\ride{G}{}{}{}$.

A connected metric graph $G=(V,\Segments,E)$ can be interpreted as a map,
structured into roads and junctions, subject to additional assumptions:
\begin{itemize}
\item we restrict to metric graphs which are
  2D-consistent \cite{abs-2109-06478}, meaning intuitively they can be
  drawn in the 2D-plane such that the geometric properties of the segments
  are compatible with the topological properties of the graph. In
  particular, if two distinct paths starting from the same vertex $v$, meet
  at another vertex $v'$, the coordinates of $v'$ calculated from each path
  are identical.  For the sake of simplicity, we further restrict to graphs
  where distinct vertices are located at distinct points in the plane, and
  moreover, where no edge is self-crossing (meaning actually that distinct
  positions $(e,a)$ of the same edge $e$ are located at distinct points).
\item the map is equipped with a symmetric {\em junction}
  relationship $\crossing$ on edges $E$ which abstracts the geometric
  crossing (or the proximity) between edges at positions other than the
  edge end points.  This relationship is used to define
  the \emph{junctions} of the map, that is, as any non-trivial equivalence
  class in the transitive closure of $\crossing$.  Actually, junctions need
  additional signalisation to regulate the traffic on their edges (e.g.,
  traffic lights, stop signs, etc).  In addition, we assume a partial
  ordering $\prec_j$ on the set of vertices to reflect their static
  priorities as junction entries.
\item to resolve conflicts at merger vertices, i.e., vertices with two or
  more incident segments which do not belong to a junction, we assume that
  the map is equipped with a static priority relationship.  Specifically,
  for a vertex $v$, there is a total priority order $\priority{v}$ on the
  set of edges $\pre{v}$.  This order reflects an abstraction of the static
  priority rules associated with each of the merging edges (e.g.,
  right-of-way, yield-priority, etc).
\item every edge $e$ is associated with a maximal speed limit
  $\speed{e} \in \PosReals$.
\end{itemize}

  \begin{figure}[htbp]
    \vspace{-3mm}
    \begin{center}
      \input{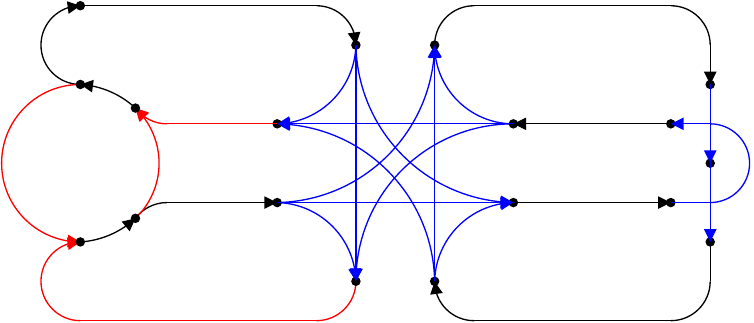_t}
    \end{center}
    \vspace{-3mm}
    \caption{\label{fig:map}{A map with junctions (blue edges) and a merger vertices (red
    edges)}}
    \vspace{-3mm}
  \end{figure}

In the remainder of the paper, we consider a fixed metric graph $G=(V,
\Segments,E)$ altogether with the junction relationship $\crossing$, static priorities
$\priority{v}$ and edge speed limits as discussed above. Also, we extend
the junction and priority relationships from edges to their associated
positions, that is, consider $(e_1,a_1) \sim (e_2,a_2) \eqdef e_1 \sim e_2$
for any relation $\sim \in \{ \crossing, (\priority{v})_{v\in V} \}$.
Finally, we denote by $r_1 \uplus r_2$ the property that rides $r_1$, $r_2$
in $G$ are \emph{non-crossing}, that is, their sets of positions are
disjoint and moreover not belonging to the same junction(s), except for
endpoints.

\section{The \ADS\ Dynamic Model}\label{sec:dynamic-model}

\subsection{General \ADS\ Architecture}\label{subsec:architecture}

Given a metric graph $G$ representing a map, the state of an \ADS\ is a
tuple $\langle st_o \rangle_{o \in \Objects}$ representing the distribution
of a finite set of objects $\Objects$ with their relevant dynamic
attributes on the map $G$.  The set of objects $\Objects$ includes a set of
vehicles $\Vehicles$ and fixed equipment such as lights, road signs, gates,
etc.  For a vehicle $c$, its state $st_c \eqdef \langle \po{c}, \dt{c},
\speed{c}, \wt{c}, \itn{c} \dots \rangle$ includes respectively its
\emph{position} on the map (from $\Pos{}$), its \emph{displacement}
traveled since $\po{c}$ (from $\PosReals$), its \emph{speed} (from
$\PosReals$), the \emph{waiting time} (from $\PosReals$) which is the time
elapsed since the speed of $c$ became zero, its \emph{itinerary} (from the
set of segments $\Segments$) which labels a ride starting at $\po{c}$, etc.
For a traffic light $lt$, its state $st_{lt} \eqdef \langle
\mathit{\po{lt}, \att{lt}{cl},\dots} \rangle$ includes respectively its
$\emph{position}$ on the map (from $\Pos{}$), and its \emph{color} (with
values \emph{red} and \emph{green}), etc.

The general \ADS\ model is illustrated in Fig.~\ref{fig:archi} and
consists of a set of vehicle models $\Vehicles$ and a \Runtime\ that
interact cyclically with period $\Delta t$.  The \Runtime\ calculates
free space values for each vehicle $c$ which are lenghts $\fs{c}$ of
initial rides on their itineraries $\itn{c}$ whose positions are free
of obstacles.  In turn, the vehicles adapt their speed to stay within
the allocated free space. Specifically, the interaction proceeds as
follows:
\begin{itemize}
\item each vehicle $c$ applies a \emph{speed policy} for period $\Delta t$
  respecting its free space $\fs{c}$ received from the \Runtime.  During
  $\Delta t$, it travels a distance $\dt{c}'$ to some new position
  $\po{c}'$, and at the end of the period its speed is $\speed{c}'$, its
  itinerary $\itn{c}'$, etc. The new state is then communicated to the
  \Runtime.
\item the \Runtime\ updates the system state on the map taking into
  account the new vehicle states and time-dependent object attributes.
  Then it applies a \emph{free space policy} computing the tuple
  $\langle \fs{c}' \rangle_{c \in \Vehicles}$, the new free space for
  all vehicles based on the current system state.  The corresponding
  free spaces are then communicated to vehicles and the next cycle
  starts.  
\end{itemize}

  \begin{figure}[htbp]
    \vspace{-3mm}
    \begin{center}
      \input{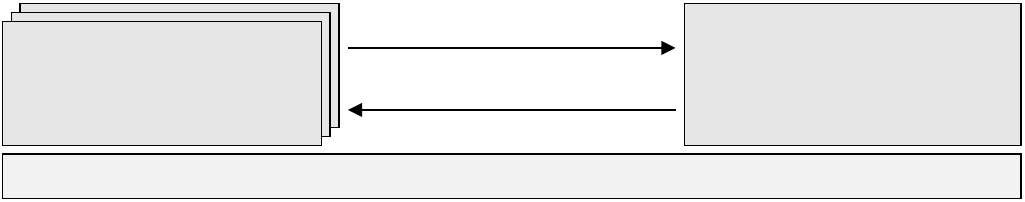_t}
    \end{center}
    \vspace{-3mm}
    \caption{\label{fig:archi}{ General \ADS\ architecture}}
    \vspace{-3mm}
  \end{figure}

Note that the coordination principle described is independent of the
type of segments used in the map, e.g. intervals, curves or
regions. For simplicity, we take the free spaces to measure 
the length of an initial ride without obstacles on the vehicle
itinerary. This abstraction is sufficient to state the basic
results. We discuss later how they can be generalized for richer
interpretations of the map.

\subsection{Assume-Guarantee for Safe Control Policies}\label{subsec:ag-contracts}

We give below the principle of a safe control policy for vehicles, which
respects their allocated free space, applying assume-guarantee reasoning.

We consider the following two hypotheses. For a vehicle $c$, there exists a
function $\Brake_c : \PosReals \rightarrow \PosReals$ that gives the
minimum braking distance $c$ needs to stop from speed $v$, in case of
emergency. Furthermore, for a non-negative distance $f$, let $\Ahead{c}{f}$
denote the ride consisting of the positions reachable on the itinerary
$\itn{c}$ from the current vehicle position $\po{c}$ within distance $f$,
formally $\Ahead{c}{f} \eqdef \{ p' \in \Pos{} ~|~ \exists \delta \le
f. ~\ride{}{\po{c}}{\subseg{\itn{c}}{0}{\delta}}{p'} \}$.

The following definition specifies a safe control policy using
assume-guarantee reasoning on the components of the
\ADS\ architecture.  We consider assume-guarantee contracts on
components defined as pairs of properties $A/G$ specifying
respectively the input-output component behavior for a cycle, i.e.,
respectively, what the component guarantees ($G$) provided its
environment conforms to given assumption ($A$).

The policy is the joint enforcement of safe speed policies for vehicles and
safe free space policies for the \Runtime.  Vehicle safe speed policies
require that if a vehicle can brake safely by moving forward within its
allocated free space at the beginning of a cycle, then it can adapt its
speed moving forward within this space. \Runtime\ safe free space policies
require that if the free spaces of the vehicles are non-crossing at the
beginning of a cycle, then it is possible to find new non-crossing free spaces
for the vehicles provided they move forward in their allocated free space.

\begin{definition}[safe control policy]
  \label{def:safe-control-policy}
  A control policy is safe if
  \begin{itemize}
  \item each vehicle $c \in \Vehicles$ respects the assume-guarantee contract:
    \begin{eqnarray*}
      0 \le \speed{c},~B_c(\speed{c}) \le \fs{c} & \big/ &
      0 \le \speed{c}',~ 0 \le \dt{c}',~ \dt{c}' + B_c(\speed{c}') \le \fs{c},~ \\
      & & \ride{}{\po{c}}{\subseg{\itn{c}}{0}{\dt{c}'}}{\po{c}'},~ \itn{c}' = \subseg{\itn{c}}{\dt{c}'}{-}
    \end{eqnarray*}
  \item the \Runtime\ respects the assume-guarantee contract:
    \begin{eqnarray*}
      \wedge_c 0 \le \dt{c} \le \fs{c},~ \uplus_c \Ahead{c}{\fs{c} - \dt{c}} & \big/ &
      \wedge_{c \in \Vehicles} \fs{c}' \ge \fs{c} - \dt{c},~ \uplus_{c\in \Vehicles} \Ahead{c}{\fs{c}'}
      \end{eqnarray*}
  \end{itemize}
\end{definition}

\begin{theorem}\label{thm:safe-control-policy}
  Safe control policies preserve the following invariants:
  \begin{itemize}
  \item the speed is positive and compliant to the free space, for all vehicles, that is, \\
    $\bigwedge_{c \in \Vehicles} 0 \le \speed{c} \wedge B(\speed{c}) \le \fs{c}$,
  \item the free spaces are non-crossing, that is, 
    $\biguplus_{c \in \Vehicles} \Ahead{c}{\fs{c}}$.
  \end{itemize}
\end{theorem}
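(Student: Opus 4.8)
The plan is to prove the statement by induction on the execution cycles. Assuming the initial state satisfies both invariants, it suffices to show that whenever both hold at the beginning of a cycle, they again hold at its end, which is the beginning of the next cycle. The argument is a chaining of the two assume-guarantee contracts of Definition~\ref{def:safe-control-policy}: the first invariant discharges the assumption of the vehicle contracts, whose guarantees then discharge the assumption of the \Runtime\ contract, whose guarantees in turn re-establish both invariants; the only extra ingredient is a monotonicity fact about the reachable rides $\Ahead{c}{\cdot}$.

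First I would apply the vehicle contracts. For each $c \in \Vehicles$, the contract assumption $0 \le \speed{c} \wedge B_c(\speed{c}) \le \fs{c}$ is precisely the first invariant, so we obtain the guarantees $0 \le \speed{c}'$, $0 \le \dt{c}'$, $\dt{c}' + B_c(\speed{c}') \le \fs{c}$, $\ride{}{\po{c}}{\subseg{\itn{c}}{0}{\dt{c}'}}{\po{c}'}$, and $\itn{c}' = \subseg{\itn{c}}{\dt{c}'}{-}$.

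Then I would apply the \Runtime\ contract to the post-move configuration, in which the displacement it observes is $\dt{c}'$, the position is $\po{c}'$, the itinerary is $\itn{c}'$, while the free space allocated for the cycle is still $\fs{c}$. Its first assumption conjunct, $0 \le \dt{c}' \le \fs{c}$, follows from $0 \le \dt{c}'$ and $\dt{c}' \le \dt{c}' + B_c(\speed{c}') \le \fs{c}$, using that $B_c$ takes values in $\PosReals$. For the second conjunct, $\biguplus_{c \in \Vehicles} \Ahead{c}{\fs{c} - \dt{c}'}$, I would argue that since vehicle $c$ rode forward a distance $\dt{c}'$ along $\itn{c}$ and its new itinerary is $\itn{c}' = \subseg{\itn{c}}{\dt{c}'}{-}$, the ride $\Ahead{c}{\fs{c} - \dt{c}'}$ issued from $\po{c}'$ along $\itn{c}'$ is exactly the tail of $\Ahead{c}{\fs{c}}$ issued from $\po{c}$ along $\itn{c}$ beyond displacement $\dt{c}'$, hence a sub-ride of it; as a pair of sub-rides of a non-crossing pair of rides is again non-crossing, the second invariant at the start of the cycle, $\biguplus_{c \in \Vehicles} \Ahead{c}{\fs{c}}$, gives the second assumption conjunct. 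The \Runtime\ contract then yields, for every $c$, both $\fs{c}' \ge \fs{c} - \dt{c}'$ and $\biguplus_{c \in \Vehicles} \Ahead{c}{\fs{c}'}$.

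It then remains to re-establish the two invariants at the end of the cycle. The second is literally the \Runtime\ guarantee $\biguplus_{c \in \Vehicles} \Ahead{c}{\fs{c}'}$. For the first, $0 \le \speed{c}'$ is a vehicle guarantee, and chaining $B_c(\speed{c}') \le \fs{c} - \dt{c}'$ (from the vehicle guarantee $\dt{c}' + B_c(\speed{c}') \le \fs{c}$) with the \Runtime\ guarantee $\fs{c}' \ge \fs{c} - \dt{c}'$ gives $B_c(\speed{c}') \le \fs{c}'$, closing the induction. The step I expect to be the only real obstacle --- everything else being pure propagation through the contracts --- is making this sub-ride argument precise: one needs that the non-crossing relation $\uplus$ is preserved under passing to sub-rides, and since the endpoints of $\Ahead{c}{\fs{c} - \dt{c}'}$ need not be endpoints of $\Ahead{c}{\fs{c}}$, this has to be verified directly in terms of position sets and junction equivalence classes, rather than by appealing to the endpoint exception built into the definition of $\uplus$.
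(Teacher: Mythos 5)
Your proof is correct and follows essentially the same route as the paper's: the paper merely dresses the same chaining of the two contracts in Hoare-triple notation, discharging the vehicle assumptions from the first invariant and the \Runtime\ assumptions from the vehicle guarantees together with the observation that $\Ahead{c}{\fs{c}-\dt{c}'}$ is the tail of $\Ahead{c}{\fs{c}}$ after the vehicle advances by $\dt{c}'$. The preservation of $\uplus$ under passing to sub-rides, which you rightly flag as the only step that is not pure propagation, is asserted in the paper's proof without further justification.
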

\begin{proof}
  Consider the usual notation for Hoare triples $\{ \phi \} P \{ \psi
  \}$ denoting that whenever the precondition $\phi$ is met, executing
  the program $P$ establishes the postcondition $\psi$.  Let
  $A_P(X)/G_P(X,X')$ be a contract for a program $P$ whose initial
  state $X$ satisfies the assumption $A_P(X)$ and which when it
  terminates guarantees the relation $G_P(X,X')$ between $X$ and the
  final state $X'$.  Then, Hoare triples are established
  by the proof rule:
  \[\begin{array}{c}
  \phi(X) \implies A_P(X) ~~~~ \exists X.~\phi(X) \wedge G_P(X,X') \implies \psi(X') \\ \hline
  \{ \phi \} P \{ \psi \}
  \end{array} \]
  We now prove that the conjunction of two assertions in the theorem is an
  inductive invariant, holding at the beginning of every cycle.  First,
  using the rule above for the assume-guarantee contract on \emph{all}
  vehicles we establish the following triple, where $||_{c \in \Vehicles} c$
  represents the program executed by the vehicle controllers in one cycle:
  \[ \begin{array}{l}
    \{ \bigwedge_{c \in \Vehicles} 0 \le \speed{c} \wedge B(\speed{c}) \le \fs{c} \wedge
    \biguplus_{c \in \Vehicles} \Ahead{c}{\fs{c}} \} \\
    \hspace{1cm} ||_{c \in \Vehicles} ~c  \\
    \{ \bigwedge_{c \in \Vehicles} 0 \le \speed{c} \wedge 0 \le \dt{c} \wedge
    \dt{c} + B(\speed{c}) \le \fs{c} \wedge
     \biguplus_{c \in \Vehicles} \Ahead{c}{\fs{c} - \dt{c}} \} \\
  \end{array} \]
  The arithmetic constraints on the speed, distance traveled and free
  space are implied from the guarantee.  The constraint on the free space
  takes into account the update of the vehicle positions, that is, moving ahead
  into their free space by the distance traveled.
  Second, using the assume-guarantee contract on the \Runtime\ we establish the triple
  \[ \begin{array}{l}
    \{ \bigwedge_{c \in \Vehicles} 0 \le \speed{c} \wedge 0 \le \dt{c} \wedge
    \dt{c} + B(\speed{c}) \le \fs{c} \wedge
    \biguplus_{c \in \Vehicles} \Ahead{c}{\fs{c} - \dt{c}} \} \\
    \hspace{1cm} \mbox{\Runtime} \\
    \{ \bigwedge_{c \in \Vehicles} 0 \le \speed{c} \wedge B(\speed{c}) \le \fs{c} \wedge
    \biguplus_{c \in \Vehicles} \Ahead{c}{\fs{c}} \} \\
  \end{array} \]
  That is, the \Runtime\ re-establishes the invariant essentially by
  providing at least the same free space as in the previous cycle.\qed
\end{proof}

Note that this theorem guarantees the safety of the coordination
insofar as the vehicles respecting their contracts remain in their
allocated free spaces which are non-crossing by
construction. Nevertheless, the result leaves a lot of freedom to
vehicles and the \Runtime\ to choose speeds and non-crossing free spaces. In
particular, two questions arise concerning these choices. The first
question is wether the system can reach states where no
progress is possible.  One can imagine traffic jam situations, for
example when vehicles do not have enough space to move.  The second
question is whether free space choices can be determined by traffic
rules that actually enforce fairness in resolving conflicts between
vehicles. This question is discussed in detail in Section
\ref{sec:freespace-policies}.

We show below that it is possible to compute non-blocking control
policies by strengthening the contracts satisfied by the vehicles and
the \Runtime\ with additional conditions. For vehicles, we require that
they move in a cycle if their free space is greater than a minimum
free space $f_{min}$. This constant should take into account the
dimensions of the vehicles and their dynamic characteristics, e.g.,
the minimum space needed to safely reach a non-negative speed from a
stop state.  Additional conditions for the contract of the \Runtime\ are
that if all vehicles are stopped, then it can find at least one free
space greater than $f_{min}$.

\begin{definition}[non-blocking control policy]
  \label{def:non-blocking-control-policy}
  A control policy is non-blocking if there exists non-negative
  $f_{min}$ such that:
  \begin{itemize}
  \item each vehicle $c\in C$ respects the $A/G$ contract:
    $ \fs{c} \ge f_{min} ~\big/~ \speed{c}' > 0$,
  \item the \Runtime\ respects the $A/G$ contract:
    $ \bigwedge_{c} \speed{c} = 0 ~\big/~ \max_c \fs{c}' \ge f_{min}$.
  \end{itemize}
\end{definition}
\begin{theorem}
  Non-blocking control policies ensure progress i.e., there is always
  a vehicle whose speed is positive.
\end{theorem}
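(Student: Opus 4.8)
The plan is to combine the already-established invariant from Theorem~\ref{thm:safe-control-policy} with the two strengthened contracts of Definition~\ref{def:non-blocking-control-policy}, using the same Hoare-logic proof rule as before. So first I would observe that along any execution of a non-blocking control policy, the invariant of Theorem~\ref{thm:safe-control-policy} holds at the beginning of every cycle, since the strengthened contracts refine the original ones; in particular $0 \le \speed{c}$ and $B(\speed{c}) \le \fs{c}$ for every vehicle $c$. The claim ``there is always a vehicle whose speed is positive'' should be read as: at the end of every cycle (equivalently, at the beginning of the next), $\exists c \in \Vehicles.~\speed{c} > 0$.

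Next I would argue by cases on the state at the beginning of a cycle. \textbf{Case 1:} some vehicle $c$ already has $\speed{c} > 0$ --- but this is not quite enough, since a vehicle could decelerate to $0$ during the cycle; so instead I would argue on the free spaces. \textbf{Case 1 (revised):} some vehicle $c$ has $\fs{c} \ge f_{min}$. Then by the vehicle contract in Definition~\ref{def:non-blocking-control-policy}, after the cycle $\speed{c}' > 0$, and we are done. \textbf{Case 2:} every vehicle has $\fs{c} < f_{min}$. Here I would need to also know that all vehicles are stopped so as to invoke the \Runtime\ contract. This is the delicate point: a vehicle with small free space need not be stopped. However, combining $B(\speed{c}) \le \fs{c} < f_{min}$ from the invariant with the choice of $f_{min}$ --- which by the discussion preceding the definition is taken below the minimum braking-and-restart space --- forces $\speed{c} = 0$ (intuitively, if $f_{min}$ is smaller than any braking distance from a positive speed, then $B(\speed{c}) < f_{min}$ implies $\speed{c}=0$). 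So in Case 2 all vehicles are stopped, hence by the \Runtime\ contract $\max_c \fs{c}' \ge f_{min}$, i.e., the next cycle starts in Case 1, and then the vehicle achieving the max moves. Chaining two cycles thus always yields a positive speed within at most one extra cycle, which I would package as the ``always a vehicle whose speed is positive'' progress statement.

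The main obstacle is making precise the interplay between $f_{min}$, the braking function $B$, and the claim that $B(\speed{c}) < f_{min}$ forces $\speed{c} = 0$. This requires spelling out an assumption on $f_{min}$ that is only described informally before the definition (``the minimum space needed to safely reach a non-negative speed from a stop state'', and implicitly that $B(v) \ge f_{min}$ for all $v > 0$). I would state this as the defining property of $f_{min}$ and then the rest is a routine two-case argument with the Hoare rule. A secondary subtlety is the precise temporal reading of ``always'': I would phrase the conclusion as an invariant of the composed system, namely that at the start of every cycle after the first either some $\fs{c} \ge f_{min}$ (so progress happens this cycle) or all vehicles are stopped (so progress is forced next cycle), which is itself inductive by the argument above.
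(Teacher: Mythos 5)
Your overall strategy matches the paper's: if the system ever comes to a global stop, the \Runtime\ contract of Def.~\ref{def:non-blocking-control-policy} hands at least $f_{min}$ of free space to some vehicle, whose own contract then forces a positive speed in the following cycle. However, your execution has a genuine gap in Case~2. You case-split on free spaces ($\exists c.\ \fs{c} \ge f_{min}$ versus $\forall c.\ \fs{c} < f_{min}$) and then, to invoke the \Runtime\ contract (whose assumption is $\bigwedge_c \speed{c}=0$, not a bound on free spaces), you need every vehicle in Case~2 to be stopped. You derive this from the claim that $B(\speed{c}) \le \fs{c} < f_{min}$ forces $\speed{c}=0$, i.e., that $B(v) \ge f_{min}$ for all $v>0$. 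That assumption is not available and is in fact incompatible with the paper's setting: the braking function used is $B(v)=v^2/2b_{max}$, which is continuous with $B(v)\to 0$ as $v\to 0^+$, while $f_{min}$ is bounded \emph{below} by $B(a_{max}\Delta t)+a_{max}\Delta t^2/2>0$ (Section~\ref{sec:speed-policies}); the informal description ``minimum space needed to reach a non-negative speed from a stop'' is a lower bound on $f_{min}$, not the upper bound your argument needs. So a vehicle crawling at small positive speed with $\fs{c}<f_{min}$ is a concrete counterexample to your Case~2 claim, and your proposed inductive disjunction (``some $\fs{c}\ge f_{min}$ or all stopped'') is not established.

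The fix is to case on speeds rather than on free spaces, which is what the paper's (very terse) proof implicitly does: either some vehicle currently has positive speed (nothing to prove for that instant), or all vehicles are stopped, in which case the \Runtime\ contract's assumption holds verbatim, it guarantees $\max_c \fs{c}' \ge f_{min}$, and the vehicle contract then yields $\speed{c}'>0$ for that vehicle at the end of the next cycle. This reading also makes explicit that the theorem is really a no-permanent-deadlock property (a global stop lasts at most until the next cycle) rather than a literal ``at every instant some speed is positive''; your observation that a moving vehicle may decelerate to zero within a cycle is correct, but it is absorbed by this weaker, intended reading rather than by the $B(v)\ge f_{min}$ lemma.
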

\begin{proof}
  The proof is an immediate consequence of the two contracts of
  Def.~\ref{def:non-blocking-control-policy}.  If all vehicles stop moving
  during a cycle, the \Runtime\ will necessarily find at least $f_{\min}$
  free space for at least one of them.  Then, at the next cycle, at least
  one vehicle will move again with positive speed, which concludes the
  proof.  \qed
\end{proof}

\section{Speed Policies abiding by the Vehicle Contract}\label{sec:speed-policies}
In this section, we show that it is possible for vehicles to compute
speed policies in accordance with their contract.

The behavior of each vehicle is defined by a controller, which given its
current speed and its free space, computes the displacement for
$\Delta t$ so that it can safely move in the free space.  Such safe speed policies
have been studied in \cite{WangLS20,abs-2103-15484}.

We illustrate the principle of safe speed policy with respect to $f$
considering that each vehicle is equipped with a controller that
receives a free space value and adjusts its speed adequately.  For the
sake of simplicity, assume the controller can select among three
different constant acceleration values $\{-b_{max}, 0,
a_{max}\} \in \Reals$ respectively, the negative value $-b_{max}$ for
decreasing, the zero value for maintaining and the positive value
$a_{max}$ for increasing the speed.  At every cycle, the controller
will select the highest acceleration value for which the vehicle
guarantee holds as defined by its contract in
Def.~\ref{def:safe-control-policy}.  Nonetheless, an exception applies
for the very particular case where the vehicle stops within the cycle,
which cannot be actually handled with constant acceleration.

The proposed speed policy defines the new speed $v'$ and
displacement $\delta'$ using a region decomposition of the safe $v \times
f$ space (that is, where $v \ge 0$ and $f \ge \Brake(v)$) as follows:
\begin{equation}
v',~ \delta' \eqdef \left\{ \begin{array}{l l}
  0,~ f & \mbox{if } f \ge \Brake(v), f - v \Delta t < \Brake(v), v - b_{max} \Delta t < 0 \\[5pt]
  \multicolumn{2}{l}{v -b_{max}\Delta t,~ v \Delta t - b_{max} \Delta t^2 / 2} \\
  &  \mbox{if } f \ge \Brake(v), f - v \Delta t < \Brake(v), v - b_{max} \Delta t \ge 0 \\[5pt]
  v,~ v \Delta t & \mbox{if } f - v \Delta t \ge \Brake(v), f - v\Delta t - a_{max} \Delta t^2 / 2 < \Brake(v + a_{max}\Delta t) \\[5pt]
  \multicolumn{2}{l}{v + a_{max}\Delta t,~ v \Delta t + a_{max} \Delta t^2 / 2} \\
  & \mbox{if } f - v\Delta t - a_{max} \Delta t^2 / 2 \ge \Brake(v + a_{max}\Delta t)
\end{array} \right.
\end{equation}
Intuitively, the regions are defined such that, when the corresponding
acceleration is constantly applied for $\Delta t$ time units, the guarantee
on the vehicle is provable given the assumptions and the region boundary
conditions.  For illustration, the regions are depicted in
Fig.~\ref{fig:region-policy} for some concrete values of $\Delta t = 1~sec$,
$a_{max}=2.5 m/s^2$ and $b_{max} = -3.4 m/s^2$.

\begin{figure}[htbp]
  \vspace{-3mm}
  \begin{center}
    \includegraphics[width=.75\textwidth]{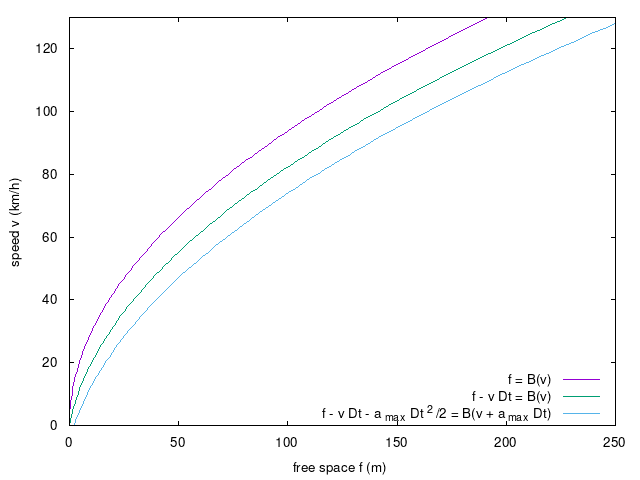}
  \end{center}
  \vspace{-3mm}
  \caption{\label{fig:region-policy} Region decomposition for safe speed policy}
  \vspace{-3mm}
\end{figure}

Moreover, the vehicle position and the itinerary are updated according to
the travelled distance by taking $\po{c}'$ such that
$\ride{}{\po{c}}{\subseg{\itn{c}}{0}{\dt{c}'}}{\po{c}'}$ and $\itn{c}'
= \subseg{\itn{c}}{\dt{c}'}{-}$.  Furthermore, the waiting time $\wt{c}$ is
updated bt taking $\wt{c}' \eqdef \wt{c} + \Delta t$ if $\speed{c}
= \speed{c}' = 0$ and $\wt{c}' \eqdef 0$ otherwise.

\begin{proposition}
  The region-based speed policy respects the safety  contract for vehicles if the
  braking function is $\Brake(v) = v^2 / 2 b_{max}$.
\end{proposition}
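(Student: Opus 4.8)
The plan is to verify, case by case, that each of the four branches in the region decomposition produces a pair $(\speed{c}', \dt{c}')$ satisfying the guarantee part of the vehicle contract in Def.~\ref{def:safe-control-policy}, namely $0 \le \speed{c}'$, $0 \le \dt{c}'$, $\dt{c}' + \Brake(\speed{c}') \le \fs{c}$, together with the position/itinerary updates, which hold by construction since we take $\po{c}'$ with $\ride{}{\po{c}}{\subseg{\itn{c}}{0}{\dt{c}'}}{\po{c}'}$ and $\itn{c}' = \subseg{\itn{c}}{\dt{c}'}{-}$. Throughout I would write $v = \speed{c}$, $f = \fs{c}$, $\Delta t$ for the cycle length, and use $\Brake(v) = v^2/2b_{max}$; note this choice makes $\Brake$ monotone increasing and satisfies the kinematic identity that braking from $v$ at constant deceleration $b_{max}$ covers exactly $\Brake(v)$ and that from $v$, after applying acceleration $\alpha$ for time $\Delta t$, the new braking distance is $\Brake(v + \alpha\Delta t) = (v+\alpha\Delta t)^2/2b_{max}$.

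First I would record the standing assumption, valid at entry to every branch because it is the invariant of Theorem~\ref{thm:safe-control-policy}: $v \ge 0$ and $f \ge \Brake(v)$. Then I would treat the branches in order. For the braking branch $v' = v - b_{max}\Delta t$, $\delta' = v\Delta t - b_{max}\Delta t^2/2$ (active when $v - b_{max}\Delta t \ge 0$): nonnegativity of $v'$ is the branch condition, nonnegativity of $\delta'$ follows since $\delta'$ is the distance travelled under constant deceleration ending at speed $v' \ge 0$, and the key inequality $\delta' + \Brake(v') \le f$ reduces, after substituting $\Brake(v') = (v - b_{max}\Delta t)^2/2b_{max}$ and expanding, to exactly $\Brake(v) \le f$ — the algebra telescopes because $\delta' + \Brake(v')$ equals the total stopping distance from speed $v$, which is $\Brake(v)$. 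For the maintain branch $v' = v$, $\delta' = v\Delta t$ (active when $f - v\Delta t \ge \Brake(v)$): the three conditions are immediate, with $\delta' + \Brake(v') = v\Delta t + \Brake(v) \le f$ being precisely the branch guard. For the accelerate branch $v' = v + a_{max}\Delta t$, $\delta' = v\Delta t + a_{max}\Delta t^2/2$ (active when $f - v\Delta t - a_{max}\Delta t^2/2 \ge \Brake(v + a_{max}\Delta t)$): nonnegativity is clear, and $\delta' + \Brake(v') \le f$ is again exactly the branch guard after noting $\delta' = v\Delta t + a_{max}\Delta t^2/2$ and $\Brake(v') = \Brake(v + a_{max}\Delta t)$.

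The delicate branch is the stop-within-cycle branch $v' = 0$, $\delta' = f$, active when $f \ge \Brake(v)$, $f - v\Delta t < \Brake(v)$, and $v - b_{max}\Delta t < 0$. Here the guarantee $\delta' + \Brake(v') = f + 0 = f \le f$ holds trivially and $v' = 0 \ge 0$, $\delta' = f \ge 0$; the real content is the tacit physical claim that a vehicle at speed $v$ with $v - b_{max}\Delta t < 0$ can actually come to rest having travelled exactly the distance $f$ within $\Delta t$ — i.e.\ that this choice is \emph{realizable} by the controller (braking at $b_{max}$ until stop, which happens at time $v/b_{max} < \Delta t$, covering distance $\Brake(v) \le f$, then idling), and that setting $\delta' = f$ rather than $\delta' = \Brake(v)$ is consistent with the contract's ride requirement. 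I would argue that since $\Brake(v) \le f$, the positions within distance $f$ ahead are available, and the contract only demands the existence of $\po{c}'$ with a ride of length $\delta'$; picking $\delta' = f$ is the conservative choice that keeps the subsequent invariant tight, while the actual travelled distance $\Brake(v)$ satisfies $\Brake(v) \le \delta' = f$, so no position constraint is violated. I expect this realizability/consistency argument — reconciling the constant-acceleration idealization with the genuine stop-and-idle behavior and justifying the choice $\delta' = f$ — to be the main obstacle; everything else is the routine kinematic algebra sketched above.
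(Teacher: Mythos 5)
Your proof is correct and follows essentially the same route as the paper's: the same four-case analysis over the region decomposition, with the same telescoping identity $\delta' + \Brake(v') = \Brake(v)$ for the braking branch and the observation that in the maintain and accelerate branches the inequality $\delta' + \Brake(v') \le f$ is exactly the region guard. The stop-within-cycle branch you single out as the main obstacle is dismissed by the paper as immediate (formally $\delta' + \Brake(v') = f + 0 \le f$), the physical realizability concern being acknowledged separately in the surrounding text as the one case that ``cannot be actually handled with constant acceleration.''
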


\begin{proof}
According to Def.~\ref{def:safe-control-policy} let assume that $0 \le v$
and $\Brake(v) \le f$.  The policy must guarantee that $0 \le v'$,
$0 \le \delta'$, $\delta' + \Brake(v') \le f$,
$\ride{}{\po{c}}{\subseg{\itn{c}}{0}{\dt{c}'}}{\po{c}'}$, $\itn{c}'
= \subseg{\itn{c}}{\dt{c}'}{-}$.  Obviously, the last two constraints are
explicitely enforced.  For the remaining constraints, the proof is made on
a case by case basis for the four regions:

\begin{enumerate}[label=(\roman*)]
\item Immediate as $v' = 0$, $\delta' = f$ in this case.
  
\item The constraint $v' \ge 0$ is equivalent to $v - b_{max} \Delta
    t \ge 0$ which is one of the region boundaries.  The constraint
    $\delta' \ge 0$ holds as $ \delta' = v\Delta t - b_{max} \Delta t^2
    = \Delta t (v - b_{max} \Delta t/2)$.  Consider the constraint $\delta'
    + \Brake(v') \le f$.  The term $\delta' + \Brake(v')$ can be
    successively rewritten as follows:
    \[\begin{array}{rcl}
      \delta' + \Brake(v') & = & v \Delta t - b_{max} \Delta t^2 / 2 + \Brake(v - b_{max} \Delta t) \\
      & = &  v \Delta t - b_{max} \Delta t^2 / 2 + (v - b_{max} \Delta)^2 / 2 b_{max} \\
      & = & v \Delta t - b_{max} \Delta t^2 / 2 + v^2 / 2 b_{max} - v \Delta t + b_{max} \Delta t^2 / 2 \\
      & = & v^2 / 2 b_{max} = \Brake(v)
    \end{array}\]
    Henceforth, $\delta' + \Brake(v') \le f$ is equivalent to $\Brake(v)
    \le f$ and holds from the assumption.
    
  \item The constraint $v' \ge 0$ holds because $v'=v$.  The constraint
    $\delta' \ge 0$ holds because $\delta' = v \cdot \Delta t$. The
    constraint $\delta' + \Brake(v') \le f$ is equivalent to $v \Delta t
    + \Brake(v) \le f$ which is one of the region boundaries.
    
  \item The constraint $v' \ge 0$ holds because $v' = v + a_{max} \Delta t$
    and $v$, $a_{max}$, $\Delta t$ are all positive.  Similarly, the
    constraint $\delta' \ge 0$ holds as $\delta' = v \Delta t +
    a_{max} \Delta t^2/2$.  The constraint $\delta' + \Brake(v') \le f$ is
    equivalent to $v \Delta t + a_{max} \Delta t^2 / 2 + \Brake(v +
    a_{max}\Delta t) \le f$ and is the region boundary.\qed
\end{enumerate}
\end{proof}

\begin{proposition}
The region-based policy respects the non-blocking contract for vehicles for any
$f_{min} \ge \Brake(a_{max}\Delta t) + a_{max}\Delta t^2/2$.
\end{proposition}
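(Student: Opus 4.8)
The plan is to unfold Def.~\ref{def:non-blocking-control-policy}: under the safety assumptions ($0 \le v$, $\Brake(v) \le f$) together with the non-blocking assumption $f \ge f_{min}$, I must show that the new speed $v'$ returned by the four-region policy satisfies $v' > 0$ — the safety part of the guarantee being already secured by the previous proposition. I would then split on the four regions of the decomposition, recycling the algebra from the previous proof.

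Three of the four cases are essentially free. In region~(iv), $v' = v + a_{max}\Delta t > 0$ outright. In region~(iii), $v' = v$, and $v>0$ here: if $v$ were $0$, the region-(iii) condition $f - v\Delta t - a_{max}\Delta t^2/2 < \Brake(v + a_{max}\Delta t)$ would read $f < \Brake(a_{max}\Delta t) + a_{max}\Delta t^2/2 \le f_{min}$, contradicting $f \ge f_{min}$. In region~(ii), $v' = v - b_{max}\Delta t \ge 0$ by the region boundary, and this is promoted to a strict inequality by the same computation used to kill region~(i).

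The crux — and the step I expect to be the real fight — is region~(i), where the policy outputs $v' = 0$ (the ``stops within the cycle'' exception). The claim must be that $f \ge f_{min}$ makes this region unreachable. Its defining conditions give $v < b_{max}\Delta t$ and $f < v\Delta t + \Brake(v) = v\Delta t + v^2/2b_{max}$; since the right-hand side increases in $v$, the bound $v < b_{max}\Delta t$ forces $f < b_{max}\Delta t^2 + \Brake(b_{max}\Delta t) = \tfrac{3}{2}b_{max}\Delta t^2$. So region~(i) — and likewise the degenerate corner $v = b_{max}\Delta t$ of region~(ii) — sits entirely below a free-space threshold of order $b_{max}\Delta t^2$, and the hypothesis on $f_{min}$ is exactly what one invokes to contradict $f \ge f_{min}$ there. (The region-(iii) computation above already shows a lower bound of the stated shape on $f_{min}$ is necessary; verifying that the stated bound is also large enough to clear region~(i), i.e.\ that it dominates $\tfrac{3}{2}b_{max}\Delta t^2$ for the operative relation between $a_{max}$ and $b_{max}$, is the delicate bookkeeping point.) Once region~(i) is excluded, the three remaining cases deliver $v' > 0$ and the proof is done.
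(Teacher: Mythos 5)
Your four-region case split targets the literal reading of the contract (every $v\ge 0$ with $f\ge f_{min}$ yields $v'>0$), and the step you yourself flag as ``the delicate bookkeeping point'' is exactly where it breaks --- and it cannot be repaired. Region (i) requires $f < v\Delta t + \Brake(v)$ with $v < b_{max}\Delta t$, hence $f < \tfrac{3}{2}b_{max}\Delta t^2$; but the stated bound $\Brake(a_{max}\Delta t)+a_{max}\Delta t^2/2$ does \emph{not} dominate $\tfrac{3}{2}b_{max}\Delta t^2$ unless $a_{max}$ exceeds roughly $1.3\,b_{max}$. With the paper's own illustration values ($\Delta t=1$, $a_{max}=2.5$, $b_{max}=3.4$) one gets $f_{min}\approx 2.17$ while region (i) is populated up to $f<5.1$: e.g.\ $v=3$, $f=3$ satisfies all three region-(i) conditions, so the policy outputs $v'=0$ even though $f>f_{min}$. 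The same values defeat your region-(ii) corner argument. So region (i) is \emph{not} unreachable under $f\ge f_{min}$, the guarantee $v'>0$ genuinely fails for slow-but-moving vehicles, and your plan of ``excluding region (i)'' proves a statement that is false as you have read it.

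The paper's proof is a one-liner that sidesteps this by treating the non-blocking contract as only needing to fire for a \emph{stopped} vehicle --- the only case invoked in the non-blocking theorem (all vehicles stopped, the \Runtime\ grants one of them $f\ge f_{min}$, that one must restart). At $v=0$ regions (i) and (ii) are vacuous, your region-(iii) computation shows $f\ge f_{min}$ excludes region (iii) (and shows the bound is tight), and region (iv) then gives $v'=a_{max}\Delta t>0$; indeed $\Brake(a_{max}\Delta t)+a_{max}\Delta t^2/2$ is precisely the region-(iv) threshold at $v=0$. Restrict your argument to $v=0$ (equivalently, read $\speed{c}=0$ into the assume part) and your regions (iii)/(iv) analysis already constitutes a complete --- and more detailed --- version of the paper's proof; as written, the claims about regions (i) and (ii) are the gap.
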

\begin{proof}
Actually, the value $\Brake(a_{max}\Delta t) + a_{max}\Delta t^2/2$
represents the minimal amount of space for which the policy will select
acceleration $a_{max}$ when the speed is zero, as defined by the condition
of the 4th region. \qed
\end{proof}

Note that the speed policy works independently of the value of the parameter
$\Delta t$, which is subject only to implementation constraints, e.g.,
it must be large enough to allow the controlled electromechanical
system to realize the desired effect. A large $\Delta t$ may imply low
responsiveness to changes and jerky motion, but will never compromise
the safety of the system.

The proposed implementation of the speed policy is "greedy" in the
sense that it applies maximum acceleration to move as fast as possible
in the available space. We could have "lazy" policies that do not move
as fast as possible, and simply extend the travel time.  We have shown
in \cite{WangLS20} that the region-based speed policy approaches the
optimal safety policy, i.e., the one that gives the shortest travel
time, when we refine the choice of acceleration and deceleration rates
in the interval $[-b_{max}, a_{max}]$.

\section{Free Space Policies implied by Traffic Rules}\label{sec:freespace-policies}
In this section, we study free space safety policies for a given set of
global system properties describing traffic rules. We formalize traffic
rules as a class of linear temporal logic formulas and provide a method for
computing free space values for vehicles that allow them to meet a given
set of traffic rules.

\subsection{Writing Specifications of Traffic Rules}

Traffic rules are a special class of properties that can be reliably
applied by humans.  They involve the responsibility of a driver who can
control the speed and direction of a vehicle on the basis of an
approximate knowledge of its kinetic state.  They do not include conditions
that are difficult to assess by the subjective judgment of human drivers,
whereas they could be verified by properly instrumented computers.  Thus,
traffic rules are based on topological considerations rather than on
quantitative information, such as an accurate comparison of vehicle
speeds. Furthermore, they can be formulated as implications where the
implicant is a condition that is easy to check by a driver and the
conclusion is a constraint on controllable variables that call for a
possible corrective action by the driver.

Given a map $G$ and a set of objects $\Objects$, we specify traffic
rules as formulas of a linear time logic of the following form, where
$\always$ is the \emph{always} time modality and $\next$ is the \emph{next} time
modality:
\begin{equation}
  \label{eq:traffic-rule}
  \always~ \forall c_1.~\forall o_2...\forall o_k.~ \phi(c_1,o_2,\dots,o_k) \implies
  \next \psi(c_1,o_2,\dots,o_k)
\end{equation}
A rule says that for any run of the system, the satisfaction of the
precondition $\phi$ implies that the postcondition $\psi$ holds at the
next state. Both $\phi$ and $\psi$ are boolean combinations of state
predicates as defined below. Furthermore, we assume that $\psi$
constrains the speed of a single vehicle $c_1$ for which the property
is applicable, and which we call for convenience the \emph{ego}
vehicle.

The rules involve state predicates $\phi$ in the form of 
first-order assertions built
from variables and object attributes (denoting map positions, segments,
reals, etc) using available primitives on map positions (e.g., rides
$\ride{}{}{}{}$, edge rides $\edgeride{}{}{}{}$, distance $\distance{}$,
equality $=$), on segments (e.g., concatenation and subsegment
extraction), in addition to real arithmetic and boolean operators.

Moreover, we define auxiliary non-primitive \emph{location} and
\emph{itinerary} predicates proven useful for the expression of traffic
rules. For a vehicle $c \in \Vehicles$ and $x$ either an object $o\in
\Objects$, a vertex $u$ or an edge $e$ of the map, we define the predicates
$\at{c}{x}$ ($c$ \emph{is at} $x$), $\edgemeets{c}{x}$ ($c$ \emph{meets}
$x$ \emph{along the same edge}), $\meets{c}{x}$ ($c$ \emph{meets} $x$) as
in Table~\ref{tab:aux-predicates}.  Furthermore, for a vehicle $c\in
\Vehicles$ and non-negative $\delta$ let $\po{c} \oplus_c \delta$ denote
the future position of $c$ after traveling distance $\delta$, that is,
either $\po{c}$ if $\delta=0$ or the position $p'$ such that
$\ride{}{\po{c}}{\subseg{\itn{c}}{0}{\delta}}{p'}$. We extend $\oplus_c$ to
arbitrary future positions of $c$ by taking $(\po{c} \oplus_c \delta)
\oplus_c \delta' \eqdef \po{c} \oplus_c (\delta + \delta')$ and we
consider the total ordering $\le_c$ defined
as $\po{c} \oplus_c \delta \le_c \po{c} \oplus_c \delta'$ if and only if $\delta \le
\delta'$.
 
\begin{table}[htbp]
  \[ \begin{array}{c|c|c|c} \hline
    &
    \at{c}{x} &
    \edgemeets{c}{x} &
    \meets{c}{x}
    \\ \hline
    x = o &
    \po{c} = \po{o} & 
    \exists \delta.~ \edgeride{}{\po{c}}{\subseg{\itn{c}}{0}{\delta}}{\po{o}} &
    \exists \delta.~ \ride{}{\po{c}}{\subseg{\itn{c}}{0}{\delta}}{\po{o}} 
    \\ 
    x = u &
    \po{c} = u &
    \exists \delta.~ \edgeride{}{\po{c}}{\subseg{\itn{c}}{0}{\delta}}{u} &
    \exists \delta.~ \ride{}{\po{c}}{\subseg{\itn{c}}{0}{\delta}}{u} 
    \\
    x = e &
    \exists a.~\po{c} = (e,a) &
    \exists \delta.~\exists a>0.~ \edgeride{}{\po{c}}{\subseg{\itn{c}}{0}{\delta}}{\pre{e}} \wedge &
    \exists \delta.~\exists a>0.~ \ride{}{\po{c}}{\subseg{\itn{c}}{0}{\delta}}{\pre{e}} \wedge
    \\
    &
    &
    \ride{}{\po{c}}{\subseg{\itn{c}}{0}{\delta+a}}{(e,a)} &
    \ride{}{\po{c}}{\subseg{\itn{c}}{0}{\delta+a}}{(e,a)}
    \\ \hline
  \end{array} \]
  \caption{\label{tab:aux-predicates} Location and itinerary predicates.}
  \vspace{-5mm}
\end{table}

We define the semantics of state predicates $\phi$ in the usual way, by
providing a satisfaction relation $\sigma, st \vdash \phi$, where $\sigma$
is an assignment of free variables of $\phi$ and $st$ is a system state.  A
complete formal definition can be found in \cite{abs-2109-06478}.
The semantics of rules is defined on pairs $\sigma,[st^{(t_i)}]_{i \ge
  0}$ consisting of a function $\sigma$ assigning objects instances to
object variables of the formulas and a run $[st^{(t_i)}]_{i \ge 0}$ for a
finite set of objects $\Objects$.  For initial state $st^{(t_0)}$ we define
\emph{runs} as sequences of consecutive states $[st^{(t_i)}]_{i \ge 0}$
obtained along the cyclic \ADS\ execution as described in section
\ref{subsec:architecture} and parameterized by the increasing sequence of
time points $t_i \eqdef t_0 + i \cdot \Delta t \in \PosReals$, that is,
equal to the time for reaching the $i^{th}$ system state.

\begin{table}[t!]
  \[ \begin{array}{cl} \\ \hline
    1 & \mbox{enforcing safety distance between following vehicles $c_1$ and $c_2$:} \\
    & \always~ \forall c_1.~\forall c_2.~\meets{c_1}{c_2} \implies
    \next \Brake_{c_1}(\speed{c_1}) \le \distance{}(\po{c_1},\po{c_2}) \\[5pt]
    
    2 & \mbox{coordination within all-way-stop junctions:} \\
    (i) & \mbox{safe braking of vehicle $c_1$ approaching a stop $so_1$} \\
    & \always~ \forall c_1.~\forall so_1.~ \edgemeets{c_1}{so_1}  \implies
    \next \Brake_{c_1}(\speed{c_1}) \le \distance{}(\po{c_1},\po{so_1}) \\
    (ii) & \mbox{vehicle $c_1$ obeys a stop sign when another vehicle $c_2$ crosses the junction} \\
    & \always~ \forall c_1.~\forall so_1.~ \forall c_2.~  \at{c_1}{so_1} \wedge \speed{c_1}=0
    \wedge \speed{c_2} > 0  \wedge  \po{c_1} \crossing \po{c_2} \implies \next \speed{c_1} = 0 \\
    (iii) & \mbox{if two vehicles $c_1$, $c_2$ are waiting before the respective stops $so_1$, $so_2$} \\
    & \mbox{and $c_2$ waited longer than $c_1$ then $c_1$ has to stay stopped} \\
    & \always~ \forall c_1.~\forall so_1.~ \forall c_2.~ \forall so_2.~ \at{c_1}{so_1} \wedge \speed{c_1} = 0
    \wedge \at{c_2}{so_2}  \wedge \speed{c_2} = 0 ~\wedge \\
    & \hspace{1cm} \po{c_1} \crossing \po{c_2}  \wedge \wt{c_1} < \wt{c_2} \implies \next \speed{c_1} = 0 \\
    (iv) & \mbox{if two vehicles $c_1$, $c_2$ are waiting before the respective stops $so_1$, $so_2$ the same} \\
    & \mbox{amount of time and $c_2$ is at an entry with higher priority then $c_1$ has to stay stopped} \\
    & \always~ \forall c_1.~\forall so_1.~ \forall c_2.~ \forall so_2.~ \at{c_1}{so_1} 
    \wedge \speed{c_1} = 0 \wedge \at{c_2}{so_2}  \wedge \speed{c_2} = 0 ~\wedge \\
    & \hspace{1cm} \po{c_1} \crossing \po{c_2} \wedge \wt{c_1} = \wt{c_2} \wedge \po{so_1} \prec_j \po{so_2}
    \implies \next \speed{c_1} = 0 \\[5pt]

    3 & \mbox{coordination using traffic-lights:} \\
    & \mbox{if vehicle $c_1$ meets a red traffic light $lt_1$, it will remain in safe distance} \\
      & \always~ \forall c_1.~\forall lt_1.~ \edgemeets{c_1}{lt_1} \wedge \att{lt_1}{color}=red \wedge
    \Brake_{c_1}(\speed{c_1}) \le \distance{}(\po{c_1},\po{lt_1}) \\ 
    & \hspace{1cm} \implies \next \Brake_{c_1}(\speed{c_1}) \le \distance{}(\po{c_1},\po{lt_1}) \\[5pt]

    4 & \mbox{priority-based coordination of two vehicles $c_1$ and $c_2$ whose itineraries } \\
    & \mbox{meet at merger vertex $u$:} \\ 
    (i) & \mbox{if $c_2$ cannot stop at $u$ then $c_1$ must give way} \\
    & \always~ \forall c_1.~\forall c_2.~\forall u.~ \edgemeets{c_1}{u} \wedge
    \Brake_{c_1}(\speed{c_1}) \le \distance{}(\po{c_1},u) ~\wedge \\
    & \hspace{1cm} \edgemeets{c_2}{u} \wedge \Brake_{c_2}(\speed{c_2}) > \distance{}(\po{c_2},u) 
     \implies \next \Brake_{c_1}(\speed{c_1}) \le \distance{}(\po{c_1},u)\\
    (ii) & \mbox{if $c_1$, $c_2$ are reaching $u$ and $c_1$ has less priority than $c_2$ then $c_1$ must give way} \\
     & \always~ \forall c_1.~\forall c_2.~\forall u.~ \edgemeets{c_1}{u} \wedge
    \Brake_{c_1}(\speed{c_1}) = \distance{}(\po{c_1},u) \wedge 
     \po{c_1} \priority{u} \po{c_2} ~\wedge \\
    & \hspace{1cm} \edgemeets{c_2}{u} \wedge \Brake_{c_2}(\speed{c_2}) = \distance{}(\po{c_2},u)  \implies
    \next \Brake_{c_1}(\speed{c_1}) \le \distance{}(\po{c_1},u) \\[5pt]

    5 & \mbox{enforcing speed limits for vehicle $c_1$:} \\
    (i) & \mbox{if $c_1$ is traveling in an edge $e$ then its speed should be lower than the speed limit} \\
    & \always~ \forall c_1.~ \forall e.~ \at{c_1}{e} \implies \next \speed{c_1} \le \speed{e} \\
    (ii) & \mbox{if $c_1$ is approaching an edge $e$ then it controls its speed so that it complies} \\
    & \mbox{ with the speed limit at the entrance of $e$} \\
    & \always~ \forall c_1.~ \forall e.~ \edgemeets{c_1}{e} \implies
    \next \Brake_{c_1}(\speed{c_1}) \le \distance{}(\po{c_1},\pre{e}) + \Brake_{c_1}(\speed{e}) \\ \hline
  \end{array} \]
  \caption{\label{tab:traffic-rules} Traffic rules}
  \vspace{-5mm}
\end{table}  

We provide examples of traffic rules in Table~\ref{tab:traffic-rules}.
We restrict ourselves to safety rules that characterize boundary
conditions that should not be violated by the driver controlling the
vehicle speed. Therefore, the preconditions characterize potential
conflict situations occurring at intersections as well as other
constraints implied by the presence of obstacles or speed rules, e.g.,
traffic lights or speed limit signals. The preconditions may involve
various itinerary and location predicates and constraints on the speed
of the ego vehicle. Moreover, the latter are limited to constraints
maintained by the vehicle and involving braking functions in the form
$\Brake_c (\speed{c}) ~\#~ k$ where $k$ is a distance with respect to
a reference position on the map and $\#$ is a relational symbol $\#
\in \{<,\le,=,\ge,>\}$. Furthermore, the postconditions 
involve two types of constraints on the speed of the ego vehicle:
either speed regulation constraints that limit the distance to full
stop, that is $\Brake_{c_1}(\speed{c_1})$, or speed limitation
constraints requiring that the speed $\speed{c_1}$ does not exceed a given limit
value.

Note the difference with other approaches using unrestricted
linear temporal logic, with "eventually" and "until" operators, to
express traffic rules, e.g. \cite{abs-2109-06478}. We have adopted the above
restrictions because they closely characterize the vehicle safety
obligations in the proposed model. Furthermore, as we show below,
traffic rules of this form can be translated into free space
rules that can reinforce the policy managed by the \Runtime.

\subsection{Deriving Free Space Rules from Traffic Rules}

We show that we can derive from traffic rules limiting the speed of
 vehicles, rules on free space variables controlled by the \Runtime\ as illustrated in
Fig.~\ref{fig:refined-archi}) such that both the traffic rules and 
the free space contract hold.

  \begin{figure}[htbp]
    \vspace{-3mm}
    \begin{center}
      \input{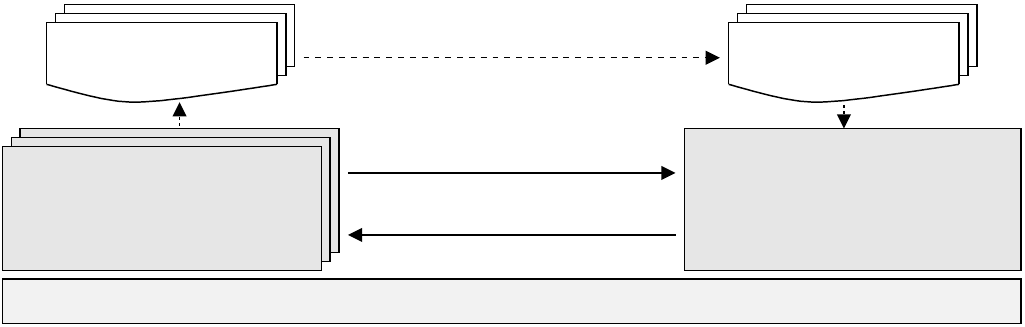_t}
    \end{center}
    \vspace{-3mm}
    \caption{\label{fig:refined-archi}{Refined \ADS\ architecture}}
    \vspace{-3mm}
  \end{figure}

To express constraints on the free space variables $\fs{c}$, we use,
for vehicles $c$, auxiliary \emph{limit position} variables $\langle
\limit{c} \rangle_{c \in \Vehicles}$ such that $\limit{c} = \po{c}
\oplus_c \fs{c}$.  In other words, the limit position $\limit{c}$
defines the position beyond which a vehicle should not be according to its
contract. It is clear that for given $\limit{c}$ and $\po{c}$,
$\fs{c}$ is defined as the distance from $\po{c}$ to $\limit{c}$.

Using the limit position variables $\langle \limit{c} \rangle_{c \in
  \Vehicles}$ we can transform structurally any state formula $\phi$ into a
free space formula $\phi_\pi$ by replacing constraints on speeds by
induced constraints on limit positions as follows, for $\# \in \{<, \le, =,
\ge, >\}$ and $t$ a non-negative real constant:
\[ \begin{array}{rcl c rcrcl}
  \Brake_{c}(\speed{c}) & \# & \distance{}(\po{c},x) + t &
  \hspace{1cm} \mapsto \hspace{1cm} &
  \limit{c} & \#_c & x & \oplus_c & t \\
  \speed{c} & \# & t &
  \mapsto &
  \limit{c} & \#_c & \po{c} & \oplus_{c} & \Brake_c(t) 
\end{array} \]

The first case concerns speed regulation constraints 
bounding the limit position $\limit{c}$ relatively to the position $x$ of a
fixed or moving obstacle ahead of $c$, that is, a stop or traffic light
sign, a vehicle, etc.  The second case concerns speed limitation
constraints bounding $\limit{c}$ relatively to the current vehicle
position $\po{c}$ and the allowed speed.

Given a state formula $\phi$, let $\phi_{\pi}$ be the derived formula obtained by
replacing constraints on speeds by constraints on limit
positions. The following theorem guarantees preservation between properties
involving speed constraints and properties involving limit positions, in
relation to the vehicle speed contracts.

\begin{theorem}
  \label{thm:preservation}
  The following equivalences hold:
  \[ \begin{array}{rrcl}
    (i) & \phi & \Longleftrightarrow & (\exists~ \limit{c})_c~ \phi_\pi \wedge
    \bigwedge\nolimits_{c} \Brake_c(\speed{c}) = \distance{}(\po{c},\limit{c}) \\
    (ii) & \cl{\phi} & \Longleftrightarrow & (\exists~ \limit{c})_c~ \phi_\pi \wedge
    \bigwedge\nolimits_{c} \Brake_c(\speed{c}) \le \distance{}(\po{c},\limit{c})
  \end{array}\]
  where $\cl{\phi}$ is the speed-lower closure of $\phi$, that is,
  $\phi$ where speed constraints
  of the form $\speed{c} ~\#~ t$, $\Brake_c(\speed{c}) ~\#~
  \distance{}(\po{c},x)) + t$ for $\# \in \{ \ge, > \}$ are removed.
\end{theorem}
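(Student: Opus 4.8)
The plan is to reduce both equivalences to an atom-by-atom analysis, exploiting that $\phi \mapsto \phi_\pi$ is purely structural: it fixes every non-speed atom and commutes with the boolean connectives, rewriting only the speed atoms. Write $z_c \eqdef \po{c} \oplus_c \Brake_c(\speed{c})$ for the tight limit position of $c$. Each $\limit{c}$ ranges over future positions of $c$, and along the itinerary $\distance{}(\po{c},\po{c}\oplus_c\delta)=\delta$, so by the definition of $\le_c$ the side conjunct $\Brake_c(\speed{c}) = \distance{}(\po{c},\limit{c})$ is equivalent to $\limit{c} =_c z_c$, while $\Brake_c(\speed{c}) \le \distance{}(\po{c},\limit{c})$ is equivalent to $\limit{c} \ge_c z_c$. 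I will also use that $\Brake_c$ is continuous and strictly increasing (as for $\Brake(v)=v^2/2b_{max}$), hence $\speed{c}~\#~t \iff \Brake_c(\speed{c})~\#~\Brake_c(t)$ for every relation $\#$, and that $\distance{}(\po{c}, x\oplus_c t) = \distance{}(\po{c},x)+t$ whenever $x$ is a future position of $c$.

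For part $(i)$ the side conjunct pins down each witness, $\limit{c} =_c z_c$, so the right-hand side is equivalent to $\phi_\pi$ with every $\limit{c}$ substituted by $z_c$ (the existential is harmless since the side conjunct is satisfiable, indeed uniquely witnessed). It then suffices to check, atom by atom, that this substitution recovers $\phi$. A speed-regulation atom $\Brake_c(\speed{c})~\#~\distance{}(\po{c},x)+t$ has image $\limit{c}~\#_c~x\oplus_c t$; substituting $\limit{c}=z_c$ and using additivity of $\distance{}$ along the itinerary gives back the original. A speed-limitation atom $\speed{c}~\#~t$ has image $\limit{c}~\#_c~\po{c}\oplus_c\Brake_c(t)$; substituting and using strict monotonicity of $\Brake_c$ yields $\Brake_c(\speed{c})~\#~\Brake_c(t)$, i.e. the original. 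Non-speed atoms are unchanged, and since $\phi$ and $\phi_\pi$ are the same boolean combination of corresponding atoms, the equivalence follows.

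For part $(ii)$ the side conjunct is the one-sided bound $\limit{c}\ge_c z_c$, so witnesses are no longer unique. The key observation is how each rewritten atom behaves once $\limit{c}$ is existentially quantified subject to $\limit{c}\ge_c z_c$: a \emph{retained} speed atom (relation in $\{\le,<\}$, with $=$ read as $\le\wedge\ge$) has image $\limit{c}~\#_c~y$ with $\#\in\{\le,<\}$, and $\exists\limit{c}\ge_c z_c.~\limit{c}~\#_c~y$ is equivalent to $z_c~\#_c~y$ — exactly the original constraint that $\cl{\phi}$ keeps; a \emph{removed} speed atom (relation in $\{\ge,>\}$) has image $\limit{c}~\#_c~y$ with $\#\in\{\ge,>\}$, and $\exists\limit{c}\ge_c z_c.~\limit{c}~\#_c~y$ is equivalent to \emph{true} (take $\limit{c}$ to be whichever of $y$ and $z_c$ lies farther along $\itn{c}$) — exactly what $\cl{\phi}$ does with it. To assemble these one uses that $\phi$ has positive polarity in its speed atoms: for ``$\Leftarrow$'', any witness family satisfying $\phi_\pi$ and the side bound makes each rewritten atom imply its $\cl{\phi}$-counterpart, hence $\cl{\phi}$ holds by monotonicity; for ``$\Rightarrow$'', from a truth certificate of $\cl{\phi}$ one sets $\limit{c} \eqdef z_c$ where the retained atoms suffice and advances $\limit{c}$ along $\itn{c}$ only as far as the removed atoms in the certificate demand.

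I expect the main obstacle to be the ``$\Rightarrow$'' direction of $(ii)$: unlike $\cl{\phi}$, the formula $\phi_\pi$ still literally contains the images of the removed $\ge/>$ speed atoms, so one must exhibit witnesses $\limit{c}$ satisfying these together with the images of the retained atoms and the lower bound $\limit{c}\ge_c z_c$. This works precisely because every retained speed constraint becomes an \emph{upper} bound on $\limit{c}$, compatible with the lower bound $z_c$ exactly when the original constraint holds, whereas every removed constraint becomes a \emph{lower} bound always met by pushing $\limit{c}$ forward along the itinerary; the only points needing care are the standing regularity assumptions — strict monotonicity of $\Brake_c$, additivity of $\distance{}$ along itineraries, and itineraries long enough to contain the referenced positions — together with reading $\speed{c}=t$ (and brake equalities) as the conjunction of their $\le$ and $\ge$ parts so that $\cl{\phi}$ retains the $\le$ part.
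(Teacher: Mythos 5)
Your part (i) is the paper's own argument in different clothing: the paper proves the same atom-level equivalences under the hypothesis $\Brake_c(\speed{c})=\distance{}(\po{c},\limit{c})$ and then introduces the existential by noting that $\phi$ does not mention the $\limit{c}$; your substitution of the unique witness $z_c=\po{c}\oplus_c\Brake_c(\speed{c})$ is the same computation. Where you genuinely diverge is part (ii): the paper disposes of it in one line (``immediate consequence of (i) by applying $\swarrow$ on both sides''), which is not by itself a proof --- the closure operation $\swarrow$ does not preserve logical equivalence in general (two equivalent formulas can have inequivalent closures), so something like your retained-atom/removed-atom case analysis is actually needed. Your observation that $\exists\,\limit{c}\ge_c z_c.~\limit{c}~\#_c~y$ collapses to $z_c~\#_c~y$ for $\#\in\{\le,<\}$ and to \emph{true} for $\#\in\{\ge,>\}$ is the right substitute for the paper's one-liner. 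Your reading of $=$ as $\le\wedge\ge$ with only the $\le$ half surviving in $\cl{\phi}$ is also a necessary sharpening of the paper's literal definition of the closure (which, read strictly, would leave $=$ atoms untouched and break (ii), e.g.\ for the precondition of rule 4(ii)).

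One caveat applies to your proof and to the theorem alike: the per-atom collapse of the existential is only sound when, for each vehicle $c$, the atoms constraining $\speed{c}$ need not be witnessed by conflicting values of the single variable $\limit{c}$. If $\phi$ conjoins a retained atom (an upper bound $\limit{c}\le_c y_1$ after transformation) with a removed atom (a lower bound $\limit{c}\ge_c y_2$) for the same $c$ with $y_1 <_c y_2$, the right-hand side of (ii) is unsatisfiable while $\cl{\phi}$ may hold; your recipe of ``advancing $\limit{c}$ only as far as the removed atoms demand'' cannot then be reconciled with the retained upper bounds. This is harmless for the rules of Table~\ref{tab:traffic-rules}, where each formula carries at most one speed atom per vehicle, but it should be recorded as a hypothesis --- as should the positive polarity of speed atoms, which you do flag.
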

\begin{proof}
  (i) Assuming $\Brake_{c}(\speed{c}) = \distance{}(\po{c},\limit{c})$ the
  following equivalences between constraints on speed $\speed{c}$ and
  derived constraints on limit position $\limit{c}$ hold trivially, for any
  $\# \in \{<,\le,=,\ge,>\}$:
  \[ \begin{array}{rrcl}
    &
    \Brake_{c}(\speed{c}) ~\#~ \distance{}(\po{c},x) + t & \wedge &
    \Brake_{c}(\speed{c}) = \distance{}(\po{c},\limit{c}) \\
    \Longleftrightarrow &
    \distance{}(\po{c},\limit{c}) ~\#~ \distance{}(\po{c},x) + t & \wedge &
    \Brake_{c}(\speed{c}) = \distance{}(\po{c},\limit{c}) \\
    \Longleftrightarrow &
    \limit{c} ~\#~ x \oplus_c t & \wedge &
    \Brake_{c}(\speed{c}) = \distance{}(\po{c},\limit{c}) \\[5pt]
    
    & \speed{c} ~\#~ t & \wedge & 
    \Brake_{c}(\speed{c}) = \distance{}(\po{c},\limit{c}) \\
    \Longleftrightarrow &
    \Brake_{c}(\speed{c}) ~\#~ \Brake_c(t) & \wedge &
    \Brake_{c}(\speed{c}) = \distance{}(\po{c},\limit{c}) \\    
    \Longleftrightarrow &
    \distance{}(\po{c},\limit{c}) ~\#~ \Brake_c(t) & \wedge &
    \Brake_{c}(\speed{c}) = \distance{}(\po{c},\limit{c}) \\    
    \Longleftrightarrow &
    \limit{c} ~\#~ \po{c} \oplus_c \Brake_c(t) & \wedge &
    \Brake_{c}(\speed{c}) = \distance{}(\po{c},\limit{c}) \\    
  \end{array} \]
  This implies, assuming $\bigwedge_c \Brake_{c}(\speed{c}) =
  \distance{}(\po{c},\limit{c})$ that any state formula $\phi$ is equivalent to
  the derived constraint $\phi_\pi$ on limit positions, that is:
  \[ \phi \wedge \bigwedge\nolimits_c \Brake_{c}(\speed{c}) = \distance{}(\po{c},\limit{c})
  \Longleftrightarrow
  \phi_\pi \wedge \bigwedge\nolimits_c \Brake_{c}(\speed{c}) = \distance{}(\po{c},\limit{c}) \]
  Then, as $\phi$ does not involve limit positions we have:
  \[ \begin{array}{rcr}
    \phi & \Longleftrightarrow &
    \phi \wedge (\exists \limit{c})_c~ \bigwedge\nolimits_c
    \Brake_{c}(\speed{c})  = \distance{}(\po{c},\limit{c}) \\
    & \Longleftrightarrow &
    (\exists \limit{c})_c~ \phi \wedge \bigwedge\nolimits_c
    \Brake_{c}(\speed{c})  = \distance{}(\po{c},\limit{c}) \\
    & \Longleftrightarrow &
    (\exists \limit{c})_c~ \phi_\pi \wedge \bigwedge\nolimits_c
    \Brake_{c}(\speed{c}) = \distance{}(\po{c},\limit{c})
    \end{array} \]
  (ii) Immediate consequence of (i) by applying $\swarrow$ on both sides.
\end{proof}

Re-calling Thm.~\ref{thm:safe-control-policy} in section \ref{subsec:ag-contracts}, notice that
$\Brake_{c}(\speed{c}) \le \distance{}(\po{c}, \limit{c})$ is enforced by
safe control policies as $\distance{}(\po{c}, \limit{c})=\fs{c}$.
Therefore, any property $\phi$ is preserved through equivalence only when
all the vehicles run with the maximal allowed speed by the distance to
their limit positions. Otherwise, the speed-lower closure $\cl{\phi}$ is
preserved through equivalence, that is, only the upper bounds on speeds
as derived from corresponding bounds on limit positions.


Therefore, all traffic rules of form
(\ref{eq:traffic-rule}) which, for states satisfying the precondition $\phi$,
constrain the speed of vehicle $c_1$ at the next cycle according to
constraint $\psi$, are transformed into free space rules on limit positions
 of the form:
\begin{equation}
  \label{eq:free-space-rule}
  \always ~ \forall c_1. \forall o_2...\forall o_k.~ \phi_\pi(c_1,o_2,\dots,o_k) \implies \next \psi_\pi(c_1,o_2,\dots,o_k)
\end{equation}
Notice that the postcondition $\psi_\pi$ is of the form
$\limit{c_1} \le_{c_1} b_\psi(c_1,o_2,\dots,o_k)$ for a position term $b_\psi$ obtained by the transformation of $\psi$.

For example, the traffic rule $1$ is transformed into the free space rule: 
$\always \forall c_1.~\forall c_2.$ $\meets{c_1}{c_2}$ $\implies$ $\next \limit{c_1} \le_{c_1} \po{c_2}$. 
The traffic rule $4(ii)$ is transformed into the free space rule: 
$\always \forall c_1.~\forall c_2.~\forall u.~ \edgemeets{c_1}{u} \wedge \edgemeets{c_2}{u} \wedge
\limit{c_1} = u \wedge \limit{c_2} = u \wedge \po{c_1} \priority{u} \po{c_2} \implies \next \limit{c_1} \le_{c_1} u$.

We are now ready to define the \Runtime\ free space policy based on traffic rules.
Let ${\cal R}$ denotes the set of traffic rules of interest e.g., the ones defined in Table~\ref{tab:traffic-rules}.  
For a current \ADS\ state $st$ and current limit positions and free spaces $\langle
\limit{c}, \fs{c} \rangle_{c\in\Vehicles}$ the policy computes
new limit positions and new free spaces $\langle \limit{c}', \fs{c}' \rangle_{c\in\Vehicles}$ as follows:
\begin{eqnarray}
  \limit{c}' & \eqdef &  \min_{\le c} ~\{~ \sigma b_\psi ~|~
   [\forall c_1. \forall o_2...\forall o_k.~ \phi \implies N \psi] \in {\cal R},~~
   \sigma[c/c_1],st \vdash \phi_\pi \} \nonumber \\
   & & \hspace{0.8cm} \cup ~ \{ ~ \post{e} ~|~ \exists a < \segnorm{e},~ \limit{c} = (e,a),~\meets{c}{e} ~ \} \\
  \fs{c}' & \eqdef & \delta \mbox{ such that } \po{c} \oplus_c \delta = \limit{c}'
\end{eqnarray}
Actually, that means computing for every vehicle $c$ the new limit position
$\limit{c}'$ as the nearest position with respect to $\le_{c}$ from two
sets of bounds.  The first set contains the bounds $\sigma b_\psi$ computed
for all the free space rules derived from the traffic rules in ${\cal R}$
and applicable for $c$ at the given state $st$. The second set contains the
endpoint $\post{e}$ of the edge $e$ where the current limit position
$\limit{c}$ is located.  It is needed to avoid "jumping" over $\post{e}$,
even though this is allowed by application of the rules, as
$\post{e}$ may be a merger node and should be considered for solving
potential conflicts.  Then, we define the new free space $\fs{c}'$ as the
distance $\delta$ from the current position $\po{c}$ to the new limit
position $\limit{c}'$ measured along the itinerary of $c$.

Note that if the free space policy respects the assume-guarantee contract
of the \Runtime\ from Def.~\ref{def:safe-control-policy} then it will
moreover guarantees the satisfaction of all traffic rules from ${\cal R}$
where both the pre- and the postcondition $\phi$ and $\psi$ are speed-lower
closed formulas.  First, conformance with respect to the contract is needed
to obtain the invariants $\Brake_c(\speed{c}) \le \fs{c} =
\distance{}(\po{c},\limit{c})$ according to
Thm.~\ref{thm:safe-control-policy}.  Second, these invariants ensure
preservation through equivalence between speed-lower closed formula and
derived formula on limit positions, according to
Thm.~\ref{thm:preservation}.  Third, the free space policy ensures the
satisfaction of the derived free space rules, that is, by construction it
chooses limit positions ensuring postconditions $\psi_\pi$ hold whenever
preconditions $\phi_\pi$ hold.  As these formulas are preserved through
equivalence, it leads to the satisfaction of the original traffic rule.

\subsection{Correctness with respect to the Free Space Contract}
We prove correctness, that is, conformance with the assume-guarantee
contract of Def.~\ref{def:safe-control-policy}, of the free space policy
obtained by the application of the traffic rules from
Table.~\ref{tab:traffic-rules} excluding the one concerning traffic lights.
For this rule we need additional assumptions taking into account the light
functioning and the behavior of the crossing vehicles.

First, we
assume that the vehicle braking dynamics are compatible with the speed
limits associated with the map segments, that means:
\begin{itemize}
\item for any edge $e$ leading to a junction (and henceforth a stop sign)
  or a merger vertex holds $\Brake_c(\speed{e}) \le \segnorm{e}$, for
  any vehicle $c\in \Vehicles$ (see Figure~\ref{fig:speed-limits}(a)),
\item for any consecutive edges $e_1$, $e_2$ holds
  $\Brake_{c}(\speed{e_1}) \le \segnorm{e_1} + \Brake_c(\speed{e_2})$, for
  any vehicle $c\in \Vehicles$ (see Figure~\ref{fig:speed-limits}(b)) i.e.,
  between two consecutive speed limit changes, there is sufficient space to
  adapt the speed.
\end{itemize}

  \begin{figure}[htbp]
    \vspace{-3mm}
    \begin{center}
      \input{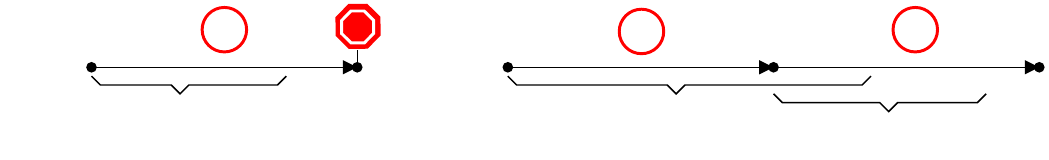_t}
    \end{center}
    \vspace{-3mm}
    \caption{\label{fig:speed-limits}{Explaining restrictions on speed limits}}
    \vspace{-3mm}
  \end{figure}

We call an \ADS\ state $\langle st_o \rangle_{o\in \Objects}$
\emph{consistent with limit positions} $\langle \limit{c} \rangle_{c\in\Vehicles}$ iff
for every vehicle $c \in \Vehicles$:
\begin{itemize}
\item the limit position is ahead of the current vehicle
  position, that is, $\po{c} \le_c \limit{c}$,
\item there is no stop sign located strictly between the current vehicle position and the
  limit position, that is, $\po{c} <_c \po{so} <_c \limit{c}$ does not hold for any stop $so$,
\item the limit position conforms to the speed limits of the current edge
  ($e_1$) and next edge ($e_2$) on the itinerary of $c$, that is, $\distance{}(\po{c},\limit{c})
  \le \Brake_c(\speed{e_1})$ and $\distance{}(\po{c},\limit{c}) \le
  \distance{}(\po{c}, \pre{e_2}) + B_c(\speed{e_2})$.
\end{itemize}
For vehicle $c$ and position $p$ located ahead of $c$ on its itinerary we
further denote by $\ahead{c}{p} \eqdef \Ahead{c}{f}$ for $f=d(\po{c},p)$,
that is, the space ahead of $c$ until position $p$.  In particular,
$\ahead{c}{\limit{c}} = \Ahead{c}{\fs{c}}$ holds as $\limit{c} = \po{c}
\oplus_c \fs{c}$.  The following lemma provides the basic conditions that
guarantee the correctness of the free space policy.
\begin{lemma}\label{lem:consistency}
  Let $st$ be an \ADS\ state and
  $\langle \limit{c} \rangle_{c\in\Vehicles}$ be limit positions such that:
  \begin{itemize}
  \item the state $st$ is consistent with the limit positions $\langle \limit{c} \rangle_{c\in\Vehicles}$,
  \item the spaces ahead up to the limit positions are non-crossing,
    that is, $\biguplus_{c\in\Vehicles} \ahead{c}{\limit{c}}$.
  \end{itemize}
  Let $\langle \limit{c}' \rangle_{c\in\Vehicles}$ be the new limit positions computed for
  state $st$ and $\langle \limit{c} \rangle_{c\in\Vehicles}$ according to the free
  space policy.  Then, the following hold:
  \begin{enumerate}[label=(\alph*)]
  \item $\limit{c} \le_c \limit{c}'$, for every vehicle $c\in \Vehicles$,
  \item the state $st$ is consistent with the new limit positions $\langle \limit{c}' \rangle_{c\in\Vehicles}$,
  \item the spaces ahead up to the new limit positions are non-crossing,
    that is, $\biguplus_{c\in\Vehicles} \ahead{c}{\limit{c}'}$.
  \end{enumerate}
\end{lemma}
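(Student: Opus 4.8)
The plan is to prove the three conclusions (a), (b), (c) essentially in that order, since each relies on the previous. For (a), I would unfold the definition of $\limit{c}'$ as a minimum over two sets of position bounds. The first set collects the bounds $\sigma b_\psi$ from traffic rules whose (transformed) precondition $\phi_\pi$ holds at the current state; the second set is $\{\post{e}\}$ when $\limit{c}=(e,a)$ with $a<\segnorm{e}$ and $c$ meets $e$. I would check that \emph{each} of these bounds is $\ge_c \limit{c}$. For the edge-endpoint bound this is immediate since $\post{e}$ lies beyond $(e,a)$ on the same edge, hence beyond $\limit{c}$ along the itinerary. For each rule-derived bound $b_\psi$, I would go through the rules of Table~\ref{tab:traffic-rules} (excluding rule~3), observing that in every case the precondition $\phi$ contains a clause of the form $\Brake_{c_1}(\speed{c_1})\ \#\ \distance{}(\po{c_1},x)+t$ or $\speed{c_1}\ \#\ t$ with $\#\in\{=,\le\}$, which after the $\pi$-transformation becomes $\limit{c_1}\ \#_{c_1}\ (x\oplus_{c_1}t)$ resp. $\limit{c_1}\ \#_{c_1}\ (\po{c_1}\oplus_{c_1}\Brake_{c_1}(t))$, and the postcondition bound $b_\psi$ is exactly that same right-hand side. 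So if $\phi_\pi$ holds, then $\limit{c}\ \#_c\ b_\psi$ holds with $\#\in\{=,\le\}$, giving $\limit{c}\le_c b_\psi$. Taking the min over a set all of whose elements are $\ge_c\limit{c}$ yields $\limit{c}\le_c\limit{c}'$.

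For (b), consistency of $st$ with $\langle\limit{c}'\rangle$, I would verify the three bullet points of the definition of consistency. The first, $\po{c}\le_c\limit{c}'$, follows from (a) composed with the hypothesis $\po{c}\le_c\limit{c}$. For the "no stop sign strictly between" condition, I would use that any edge $e$ leading to a junction carries a stop sign only at $\post{e}$, together with rule~$2(i)$: if $c$ meets a stop $so$ along its current edge, then $2(i)_\pi$ forces $\limit{c}'\le_c\po{so}$, so no stop lies strictly before $\limit{c}'$; the edge-endpoint clause in the policy is what prevents jumping past $\post{e}$ when the stop sits at a merger/junction boundary further along. For the speed-limit conformance, I would invoke rules $5(i)$ and $5(ii)$: $5(i)_\pi$ gives $\limit{c}'\le_c \post{e_1}$ when traveling in $e_1$ (using $\Brake_c(\speed{e_1})\le\segnorm{e_1}$ to place $\speed{e_1}\oplus$-bound correctly), and $5(ii)_\pi$ gives $\limit{c}'\le_c \pre{e_2}\oplus_c\Brake_c(\speed{e_2})$, which is exactly the two required inequalities $\distance{}(\po{c},\limit{c}')\le\Brake_c(\speed{e_1})$ and $\distance{}(\po{c},\limit{c}')\le\distance{}(\po{c},\pre{e_2})+\Brake_c(\speed{e_2})$. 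The compatibility assumptions on braking dynamics vs. speed limits are what make these bounds mutually consistent so that the min is still $\ge_c\limit{c}$, i.e. they are needed for (a) to survive alongside (b).

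The main obstacle, and the heart of the proof, is (c): showing $\biguplus_{c}\ahead{c}{\limit{c}'}$, i.e. that enlarging the limit positions keeps the spaces ahead non-crossing. The danger is exactly at shared positions — a merger vertex $u$ on two itineraries, or two positions in the same junction. The strategy is: suppose for contradiction that $\ahead{c_1}{\limit{c_1}'}$ and $\ahead{c_2}{\limit{c_2}'}$ share a position (beyond endpoints) while $\ahead{c_1}{\limit{c_1}}$ and $\ahead{c_2}{\limit{c_2}}$ did not. Then the new overlap must involve a position newly included for $c_1$ or $c_2$, i.e. one lying in $(\limit{c_i},\limit{c_i}']$. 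I would case-split on whether the shared position is at a merger vertex or inside a junction. For a merger vertex $u$: both itineraries meet $u$ along their current edges (by the edge-endpoint clause, neither limit can have jumped past the relevant edge endpoint without $u$ being considered), and then rules $4(i)$ and $4(ii)$ — more precisely, their $\pi$-transforms — together with the priority totality $\priority{u}$ force at least one of $\limit{c_1}'\le_{c_1}u$, $\limit{c_2}'\le_{c_2}u$ (the lower-priority vehicle, or the one that cannot stop, is cut back to $u$), so the two spaces meet only at $u$, which is an endpoint and hence allowed. For a junction: rule~$2$ (parts $(ii)$–$(iv)$) plus rule~$1$ handle the case of a moving vehicle and of two stopped waiting vehicles ordered by waiting time or by $\prec_j$; the $\pi$-transforms of $2(ii)$–$(iv)$ pin $\limit{c_1}'$ at $\po{c_1}$ (since $\psi$ is $\speed{c_1}=0$, whose transform is $\limit{c_1}\le_{c_1}\po{c_1}$) so $c_1$ does not enter the junction at all. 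The delicate point is arguing that these rule preconditions $\phi_\pi$ actually \emph{fire} in every dangerous configuration — this requires combining the hypothesis $\biguplus_c\ahead{c}{\limit{c}}$ (old non-crossing, so any new conflict is genuinely new) with consistency (stop signs are where expected, speed limits respected) to exhaustively enumerate how two enlarged rides can first touch, and to check that in each such enumeration the matching rule's precondition holds. I expect this exhaustive case analysis over the junction/merger geometry, rather than any single inequality, to be the bulk of the work.
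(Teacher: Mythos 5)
Your overall architecture matches the paper's proof: for (a) you unfold the policy's $\min$ and check that each candidate bound is $\ge_c \limit{c}$; for (b) you re-verify the three consistency clauses; for (c) you localize any new crossing and invoke rules $2$ and $4$. But there is a concrete error in your argument for (a). You claim that for every rule the precondition $\phi$ already contains a speed constraint on $c_1$ whose $\pi$-transform reads $\limit{c_1} ~\#_{c_1}~ b_\psi$ with $\#\in\{=,\le\}$, so that $\phi_\pi$ firing directly yields $\limit{c}\le_c b_\psi$. That is true only for rules $2(ii$--$iv)$ and $4(i,ii)$. The preconditions of rule $1$ ($\meets{c_1}{c_2}$), rule $2(i)$ ($\edgemeets{c_1}{so_1}$) and rules $5(i,ii)$ ($\at{c_1}{e}$, $\edgemeets{c_1}{e}$) contain no speed constraint on $c_1$ at all, so your argument gives nothing for them. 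These are precisely the cases where the lemma's two hypotheses must be invoked: for rule $1$ the bound is $\po{c_2}$ and $\limit{c_1}\le_{c_1}\po{c_2}$ follows from $\biguplus_{c}\ahead{c}{\limit{c}}$ (otherwise $\po{c_2}$ would lie strictly inside $c_1$'s ahead space); for rule $2(i)$ the bound is $\po{so_1}$ and the inequality follows from the consistency clause forbidding a stop strictly between $\po{c}$ and $\limit{c}$; for rules $5(i,ii)$ it follows from the consistency clause on speed limits, which you do use, but only in part (b). Without these hypotheses the bounds of rules $1$ and $2(i)$ can perfectly well fall behind $\limit{c}$, and (a) fails as argued.

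A secondary point on (c): your case split on the location of a newly shared position (merger vertex or interior of a junction) omits the ordinary car-following overlap, where $\ahead{c_1}{\limit{c_1}'}$ is extended past the position $\po{c_2}$ of a vehicle ahead on the same ride; this is neither a merger nor a junction conflict and is excluded only by rule $1$ capping $\limit{c_1}'$ at $\po{c_2}$. The paper states this as a separate fact (no limit position ever exceeds the current position of another vehicle) before treating the junction and merger cases. With that case added, and (a) repaired as above, your plan coincides with the paper's proof.
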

\begin{proof}
  (a) Let fix an arbitrary vehicle $c$.  According to the free space policy
  the limit position $\limit{c}'$ is defined as the nearest position
  among several bounds on the itinerary of $c$.  We will show that, in
  all situations, these bounds $b$ are at least as far as the current limit
  position $\limit{c}$, that is $\limit{c} \le_c b$ and hence the result.
  First, consider all applicable traffic rules and their associated bounds:
  \begin{itemize}
  \item rule $1$: The bound $b$ is defined as the position $\po{c_2}$ of the
    heading vehicle $c_2$.  The constraint $\limit{c} \le_c \po{c_2}$ holds because
    the spaces ahead to the limit positions are assumed non-crossing,
  \item rule $2(i)$: The bound $b$ is defined as the position
    $\po{so}$ of the stop sign located ahead of $c$.  The constraint
    $\limit{c} \le_c \po{so}$ holds because the state $st$ is assumed
    consistent, that is, no stop sign between the vehicle and the
    current limit position,
  \item rules $2(ii,iii,iv)$: The bound $b$ is defined as the current
    position $\po{c}$, which is equal to the current limit $\limit{c}$ due
    to the constraint $\speed{c} = 0$ in their respective preconditions,
    therefore obviously $\limit{c} \le_c \limit{c}$,
  \item rules $4(i,ii)$: The bound $b$ is defined as the merger node $u$ and the
    preconditions of the rules contain respectively $\limit{c}
    \le_c u$ and $\limit{c} = u$,
  \item rules $5(i,ii)$: Assume $c$ is located on some edge $e_1$ and the next
    edge is $e_2$.  As $st$ is consistent with current limit positions,
    this implies $\distance{}(\po{c}, \limit{c}) \le \Brake(\speed{e_1})$
    and $\distance{}(\po{c}, \limit{c}) \le \distance{}(\po{c}, \pre{e_2})
    + \Brake_c(\speed{e_2})$.  This is equivalent to $\limit{c} \le_c
    \po{c} \oplus_c \Brake_c(\speed{e_1})$ and $\limit{c} \le_c \pre{e_2}
    \oplus_c \Brake_c(\speed{e_2})$ which are the constraints for speed limit rules.
  \end{itemize}
  Second, we consider the position $\post{e}$, where the edge $e$
  contains the limit position $\limit{c}$.  That is, $\limit{c} = (e,a)$
  for some $a < \segnorm{e}$ and hence $\limit{c} \le_c (e,\segnorm{e}) =
  \post{e}$.

  (b) We know that $\po{c} \le_c \limit{c}$ for all vehicles $c$.  Then,
  from (a) above we obtain immediately that $\po{c} \le_c \limit{c}'$ for
  all vehicles $c$.  Moreover, according to traffic rule $2(i)$, new limit
  positions $\limit{c}'$ cannot move over a stop sign $so$ unless the
  vehicle $c$ is at $so$.  That means, stop signs cannot be inserted
  between a vehicle and their limit positions.  Finally, according to
  traffic rules $5(i,ii)$, the new limit positions are at most as far as
  the bounds for the current and next edge speed limits.  The restriction
  on speed limits from Figure~\ref{fig:speed-limits}(b) is needed whenever
  vehicles are located at vertices, because of the transition from the
  current to the next edge.

  (c) We have seen that limit positions can either stay unchanged or
  move forward on the itineraries of their respective vehicles.  In
  order to show that spaces until the new limit positions are
  non-crossing, we need to focus on moving limit positions.  We prove
  the following facts, which guarantee that these spaces remain
  indeed non-crossing:
  \begin{itemize}
  \item \emph{no limit position exceeds the current position of a
  vehicle}. Actually, this is explicitly excluded by the traffic rule
    $1$.  Therefore, no space is extended so as to overlap
    with an existing space,
  \item \emph{two limit positions never move simultaneously so that the corresponding
    spaces cross each other}.  First, crossing may happen at
    junctions i.e., if two limit positions will simultaneously enter
    the same junction.  Such situations are excluded by traffic rules
    $2(i,ii,iii,iv)$ which force vehicles to stop and
    then solve conflicts between them in a
    deterministic manner.  Second, crossing may happen at merger nodes
    i.e., if two limit positions will simultaneously move over a
    merger node.  These situations are explicitly excluded by traffic
    rule $4(i,ii)$ which solve conflicts at merger nodes based on
    priorities, plus the extra rule forbidding limit positions to jump
    over map vertices. \qed
  \end{itemize}
\end{proof}

\begin{proposition}
  The free space policy respects the safety contract for the
  \Runtime\ provided the initial \ADS\ state is consistent with initial
  limit positions.
\end{proposition}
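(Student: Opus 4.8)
The plan is to reduce the single‑cycle obligation of the \Runtime\ contract to Lemma~\ref{lem:consistency}, carrying across cycles the invariant that the current \ADS\ state is consistent with the current limit positions and that the spaces ahead up to those positions are non‑crossing. Recall from Def.~\ref{def:safe-control-policy} that, on the state $st$ obtained once the vehicles have advanced, the \Runtime\ must guarantee $\bigwedge_{c\in\Vehicles}\fs{c}'\ge\fs{c}-\dt{c}$ and $\biguplus_{c\in\Vehicles}\Ahead{c}{\fs{c}'}$, assuming $\bigwedge_{c}0\le\dt{c}\le\fs{c}$ and $\biguplus_{c}\Ahead{c}{\fs{c}-\dt{c}}$. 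I would prove, by induction on cycles, that at the beginning of every cycle the state $st$ is consistent with the current limit positions $\langle\limit{c}\rangle_{c\in\Vehicles}$ and $\biguplus_{c\in\Vehicles}\ahead{c}{\limit{c}}$ holds; the base case is the hypothesis of the proposition, together with non‑crossing of the initial spaces, which is in any case an initial condition of Thm.~\ref{thm:safe-control-policy}.

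\textbf{Step 1 (from the contract assumption to the hypotheses of Lemma~\ref{lem:consistency}).} Since each vehicle has advanced by $\dt{c}$ with $\dt{c}\le\fs{c}$, the current (post‑move) position $\po{c}$ still satisfies $\limit{c}=\po{c}\oplus_c(\fs{c}-\dt{c})$, so $\Ahead{c}{\fs{c}-\dt{c}}$ evaluated at $st$ is exactly $\ahead{c}{\limit{c}}$; hence the non‑crossing assumption of the contract is literally the second conjunct of the invariant at $st$. It remains to check that $st$ is \emph{consistent} with the unchanged limit positions: $\po{c}\le_c\limit{c}$ follows from $\dt{c}\le\fs{c}$; the absence of a stop sign strictly between $\po{c}$ and $\limit{c}$ is inherited from the previous cycle, since $\po{c}$ lies between the pre‑move position and $\limit{c}$ on the itinerary; and the two speed‑limit conditions of consistency are obtained from those valid before the move, combined with the structural restrictions on speed limits of Fig.~\ref{fig:speed-limits}, by a telescoping argument over the (possibly several) edges the vehicle traverses during $\dt{c}$.

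\textbf{Step 2 (applying Lemma~\ref{lem:consistency} and closing the induction).} With the free‑space policy built from the rules of Table~\ref{tab:traffic-rules} without the traffic‑light rule, Lemma~\ref{lem:consistency} applied at $st$ with $\langle\limit{c}\rangle$ yields (a) $\limit{c}\le_c\limit{c}'$, (b) $st$ consistent with $\langle\limit{c}'\rangle$, and (c) $\biguplus_{c}\ahead{c}{\limit{c}'}$. From (a) and $\po{c}\le_c\limit{c}$ we get $\fs{c}'=\distance{}(\po{c},\limit{c}')\ge\distance{}(\po{c},\limit{c})=\fs{c}-\dt{c}$, which is the first part of the guarantee; and since $\ahead{c}{\limit{c}'}=\Ahead{c}{\fs{c}'}$, statement (c) is exactly the second part. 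Finally, the \Runtime\ does not alter $st$, so (b) and (c) re‑establish the invariant at the beginning of the next cycle (now with limit positions $\langle\limit{c}'\rangle$), which completes the induction; this is the conformance asserted by the proposition, and it is the form in which the free‑space policy plugs into Thm.~\ref{thm:safe-control-policy}.

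\textbf{Main obstacle.} I expect the delicate point to be the speed‑limit bookkeeping in Step~1: after advancing by $\dt{c}$ a vehicle may cross one or more vertices, changing both its current and next edge, and one must show that the distance from the new position to $\limit{c}$ still respects $\Brake_c$ of the new current edge and of the new next edge. This is precisely where the assumptions $\Brake_c(\speed{e})\le\segnorm{e}$ for edges into junctions or mergers, and $\Brake_c(\speed{e_1})\le\segnorm{e_1}+\Brake_c(\speed{e_2})$ for consecutive edges, are needed, and the inequality chain has to be organized so as to telescope across the traversed edges; the remaining parts of the argument are a direct rewriting of the contract's assumptions and an invocation of Lemma~\ref{lem:consistency}.
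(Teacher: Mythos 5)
Your proposal is correct and follows essentially the same route as the paper: reduce the cycle obligation to Lemma~\ref{lem:consistency}, read the two parts of the \Runtime\ guarantee off conclusions (a) and (c), and close an induction on cycles using (b) together with the fact that forward motion within the allocated free space preserves consistency. The only difference is one of emphasis: your Step~1 spells out the preservation of the speed-limit consistency conditions across edge crossings (via the restrictions of Fig.~\ref{fig:speed-limits}), a point the paper's proof asserts only in passing.
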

\begin{proof}
  Let consider a state $st$ consistent with limit positions $\langle
  \limit{c} \rangle_{c\in\Vehicles}$ and satisfying the assumptions stated
  in Def.~\ref{def:safe-control-policy}. We are therefore satisfying the
  premises of Lemma~\ref{lem:consistency} and consequently the new limit
  positions $\langle \limit{c}' \rangle_{c\in \Vehicles}$ will satisfy the conclusions
  (a)-(c) as stated by the Lemma~\ref{lem:consistency}.

  Then, for any vehicle $c$, using (a) we know that the new limit position
  satisfies $\limit{c} \le_c \limit{c}'$.  Consequently, the new free space
  $\fs{c}'$ which is the
  distance from the current vehicle position to the next limit position
  satisfies $\fs{c}' \ge \fs{c} - \dt{c}$, that is, the first assertion of
  the \Runtime\ guarantee in Def.~\ref{def:safe-control-policy}.  This inequality can be
  understood from Fig.~\ref{fig:update-limit-free-space} which depicts the
  generic situation for a vehicle $c$.

  \begin{figure}[htbp]
    \vspace{-3mm}
    \begin{center}
      \input{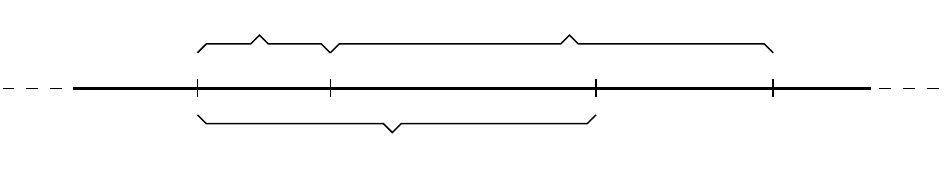_t}
    \end{center}
    \vspace{-3mm}
    \caption{\label{fig:update-limit-free-space}{Update of limit position and
    free space along the itinerary of $c$}}
    \vspace{-3mm}
  \end{figure}

  Moreover, using (c) we obtain immediately the second assertion of the
  \Runtime\ guarantee, that is, $\biguplus_{c\in \Vehicles}
  \Ahead{c}{\fs{c}'}$.

  Finally, using (b) we know that the state $st$ is consistent with new
  limit positions $\langle \limit{c}' \rangle_{c\in\Vehicles}$.  Then, as long as the
  vehicles move forward into their free spaces according to their contract, the system state
  $st'$ is consistent with respect to these new limit positions.  So,
  eventually, at the beginning of the next \Runtime\ cycle we are back to
  the initial situation considered for $st$ and limit positions $\langle
  \limit{c} \rangle_{c\in\Vehicles}$.  That means essentially that state
  consistency with respect to limit positions is an inductive invariant in
  the system, so it can be safely assumed to hold at any time provided it
  holds initially. \qed
\end{proof}

\subsection{Non-blocking Free Space Policies}\label{subsec:freespace-non-blocking}

The free space policy based on traffic rules is not non-blocking.  While
continuously meeting traffic rules, an \ADS\ can potentially evolve into a
blocking state.  For example, when a subset of vehicles is blocked in a roundabout so that none of them can move further and eventually exit the roundabout and all other
vehicles are waiting to enter the same roundabout.  The non-blocking
contract can be however fulfilled if the \Runtime\ monitors the traffic
from a global point of view and prevents situations as described above. 

Let consider that there exists a constant $f_{min}$ such that for all map
edges $e$, $\segnorm{e} > f_{min}$ and $\speed{e} > \Brake^{-1}(f_{min})$.
Furthermore, let $E_\crossing \subseteq E$ be the subset of edges belonging
to junctions.  An elementary directed path $\gamma = e_1 e_2 \dots e_m \in
E^*$ is \emph{critical} either if (i) it is a circuit visiting at most once
every junction or (ii) it ends and returns at the same junction, while
visiting at most once every other junction.  Let $\#_j(\gamma)$ be the number
of distinct junctions of a critical path $\gamma$.  We define the capacity
$\capacity{\gamma}$ of a critical path $\gamma$ as the least number of vehicles
that could "block" the critical path $\gamma$ minus one:
\begin{equation}
   \capacity{\gamma} \eqdef \#_j(\gamma) + \left(\sum\nolimits_{e \in \gamma \setminus E_\crossing}
   \big\lfloor \segnorm{e} ~/~ f_{min} \big\rfloor\right)  - 1
\end{equation}
That is, we assume that a junction can be blocked by one vehicle, and a
non-junction edge $e$ can be blocked by $\lfloor \segnorm{e} ~/~f_{min}
\rfloor$ vehicles.
\begin{proposition}
  The free space policy derived from traffic rules respects the
  non-blocking contract for the \Runtime\ as long as the number of vehicles
  on every critical path $\gamma$ is lower than the path capacity $\capacity{\gamma}$.
\end{proposition}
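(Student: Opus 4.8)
The plan is to argue by contradiction against the \Runtime\ non-blocking contract of Def.~\ref{def:non-blocking-control-policy}: assuming $\bigwedge_{c}\speed{c}=0$, I want to exhibit one vehicle with $\fs{c}'\ge f_{min}$. So suppose instead that the free space policy returns $\fs{c}'<f_{min}$ for \emph{every} $c\in\Vehicles$, i.e.\ every new limit position $\limit{c}'$ lies within distance $f_{min}$ ahead of $\po{c}$. I would first rule out the speed-limit bounds (rules $5(i,ii)$) as the binding constraint for any such $c$: by the standing hypotheses $\segnorm{e}>f_{min}$ and $\speed{e}>\Brake^{-1}(f_{min})$, the derived bounds $\po{c}\oplus_c\Brake_c(\speed{e_1})$ and $\pre{e_2}\oplus_c\Brake_c(\speed{e_2})$ sit at distance $>f_{min}$, so they cannot force $\fs{c}'<f_{min}$. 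Hence for each $c$ the binding bound must come either from a leading vehicle (rule $1$), from a stop/crossing obligation at a junction (rules $2(i,ii,iii,iv)$), from a merger yield (rules $4(i,ii)$), or from the edge-endpoint term $\post{e}$ when $c$ is within $f_{min}$ of the end of its current edge.

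Next I would build a directed \emph{blocking graph} on $\Vehicles$. To each blocked $c$ I assign a \emph{blocker}: the leading vehicle for rule $1$; the crossing/priority vehicle for rules $2(ii,iii,iv)$; the merging priority vehicle for rules $4(i,ii)$. The only bounds that do not name another vehicle are the stop-sign and endpoint bounds (rule $2(i)$ and the $\post{e}$ term), which merely cap $\limit{c}'$ at an approached junction or merger node. I would argue these cannot create persistent blocking: a vehicle capped only by an approached node reaches that node, and since its outgoing edge has length $>f_{min}$ and admits braking distance $>f_{min}$, it then obtains $\fs{c}'\ge f_{min}$ unless some vehicle is occupying or crossing the node---in which case that vehicle becomes its blocker. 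Thus in a genuinely all-stopped configuration every vehicle has an outgoing edge, and since $\Vehicles$ is finite, following blocker edges must close a cycle $c_{i}\to c_{i+1}\to\dots\to c_{i}$.

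Finally I would read this cycle off as a \emph{critical path} $\gamma$ of the map: rule-$1$ edges keep successive vehicles on the same map edge, while yielding edges (rules $2,4$) cross at a junction or pass a merger, so the concatenation of the vehicles' occupied routes is a circuit visiting each junction at most once, or returning to a junction---precisely the two cases of a critical path. The contradiction then comes from counting. Along each non-junction edge of $\gamma$ the rule-$1$ blocking forces consecutive vehicles to lie within $f_{min}$ of one another, hence at least $\lfloor\segnorm{e}/f_{min}\rfloor$ of them, and each of the $\#_j(\gamma)$ junctions of $\gamma$ carries an occupying or crossing vehicle; summing yields at least $\#_j(\gamma)+\sum_{e\in\gamma\setminus E_\crossing}\lfloor\segnorm{e}/f_{min}\rfloor=\capacity{\gamma}+1$ vehicles on $\gamma$, contradicting the hypothesis that $\gamma$ carries strictly fewer than $\capacity{\gamma}$. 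Therefore some vehicle is not blocked and receives $\fs{c}'\ge f_{min}$.

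I expect the main obstacle to be the middle step: proving rigorously that every genuinely-stopped vehicle contributes a blocker edge following the forward/merging/crossing structure of the map, so that the chain closes into a bona fide critical path rather than terminating at a vehicle that can in fact move. The delicate point is the treatment of vehicles capped only by an approached junction or merger (rule $2(i)$ and the $\post{e}$ term), where one must use $\segnorm{e}>f_{min}$ and $\speed{e}>\Brake^{-1}(f_{min})$ to show that, absent contention, they clear the node, so that only true cyclic contention---excluded by the capacity bound---can keep the whole fleet stopped. Once this is in place, the per-edge and per-junction counts giving $\capacity{\gamma}+1$ are routine consequences of the rule-$1$ packing.
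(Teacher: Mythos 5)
Your proof is correct and follows essentially the same route as the paper's: from an all-stopped, all-blocked state you build a blocking cycle among the vehicles, identify it with a critical path $\gamma$, and contradict the capacity bound by counting against $\capacity{\gamma}$. The only real difference is the direction of the pigeonhole---you lower-bound the vehicles forced onto $\gamma$ by the packing (at least $\capacity{\gamma}+1$), whereas the paper uses the bound to exhibit a hole (an empty junction or a non-junction edge with more than $f_{min}$ of slack) and argues the policy allocates that slack to some vehicle---and the delicate points you flag (merger and stop-sign jumps, the $\post{e}$ cap) are exactly the ones the paper also treats informally.
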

\begin{proof}
  Consider a state where all vehicles are stopped.  Consider an arbitrary
  vehicle $c_0$.  Two situations can happen, respectively (i) $c_0$ is
  waiting at an entry of a junction since another vehicle is already in or
  (ii) $c_0$ is waiting behind another stopped vehicle.  That means, in
  both cases, some other vehicle $c_1$ is actually blocking $c_0$.  We
  continue the same reasoning for $c_1$, and can find another vehicle $c_2$
  blocking it and so on.  As the number of vehicles is finite, we finally find a circular
  chain  $c_i$, $c_{i+1}$, $\dots$, $c_n$, $c_i$ of vehicles that block each other successively.

  Any chain of vehicles as above is located on some critical path $\gamma$,
  that is, either a circuit or a path going twice through the same junction
  of the map.  Assume the number of vehicles on the path $\gamma$ is lower than
  the path capacity $\capacity{\gamma}$.  We distinguish two situations:
  \begin{itemize}
  \item \emph{the critical path contains only junction edges:} Then, as the
    path capacity is $\#_j(\gamma) - 1$ at least one of the junctions must be
    clear of vehicles.  Therefore, at least one of the vehicles waiting at
    that junction entries will obtain the free space to proceed.  That is,
    the vehicle in the preceding junction or another one waiting at
    some other entry, depending on the applicable traffic rules.
  \item \emph{the critical path contains both junction and non-junction
    edges:} If at least one of the junctions is not empty, we can reason as
    in the previous case and find a vehicle that can proceed.  Otherwise,
    as every junction contains one vehicle, then the number of vehicles on
    the remaining non-junction edges is at most $\sum\nolimits_{e \in
      \gamma \setminus E_\crossing} \big\lfloor \segnorm{e} /f_{min} \big\rfloor - 1$. That
    means, one can find a non-junction edge with more than $f_{min}$
    unused space.  That space is eventually allocated to either one of the
    vehicles waiting on the edge, or to a vehicle waiting to enter the
    edge (that is, exiting from a junction, entering through a merger node,
    etc), depending on applicable rules.\qed
  \end{itemize}
\end{proof}


\section{Experiments}\label{sec:experiments}

We are currently developing a prototype for \ADS\ simulation implementing
the speed and free space policies introduced.  It takes as inputs (i) a map
defined as an annotated metric graph with segments that are parametric
lines or arcs and (ii) an initial state containing the positions, the
initial speeds and the itineraries of a number of vehicles.  The simulation
proceeds then as explained in section~\ref{sec:dynamic-model} by using the
specific control policies from sections~\ref{sec:speed-policies}
and \ref{sec:freespace-policies}.  The prototype uses the
SFML\footnote{Simple and Fast Multimedia
Library, \url{https://www.sfml-dev.org/}} library for graphical rendering
of states.  Fig.~\ref{fig:experiments} presents simulation snapshots for
two simple examples running respectively 5 and 18 vehicles. Note that
performance scales up smoothly as the number of vehicles increases because
each rule is applied on a linear finite horizon structure. Furthermore,
compared to simulators that particularize an \emph{ego} vehicle,
the \Runtime\ treats all vehicles the same and ignores their speed control
policy as long as they fulfill their contract.

\begin{figure}[thbp]
\centering
\begin{subfigure}[c]{.5\linewidth}
  \centering
  \includegraphics[width=.9\linewidth]{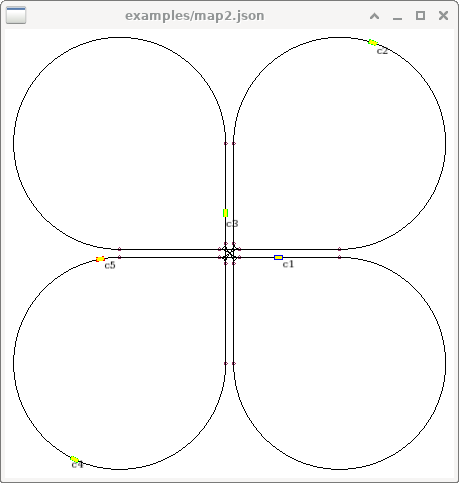}
  \label{fig:sub1}
\end{subfigure}%
\begin{subfigure}[c]{.5\linewidth}
  \centering
  \includegraphics[width=.9\linewidth]{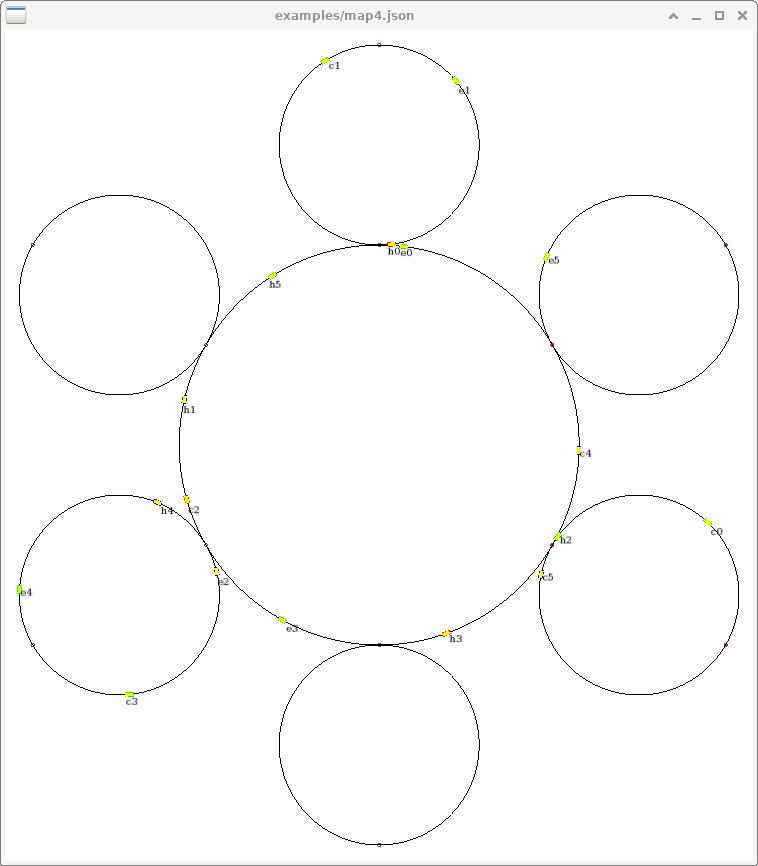}
  \label{fig:sub2}
\end{subfigure}
\caption{Snapshots of \ADS\ simulation}
\label{fig:experiments}
\vspace{-5mm}
\end{figure}


\section{Discussion}\label{sec:discussion}
The paper studies results for the correct by design coordination of
\ADS\ based on assume-guarantee contacts.  The coordination follows a
two-step synchronous interaction protocol between vehicles and a
\Runtime\ that, based on the distribution of vehicles on a map, computes
the corresponding free spaces. A first result characterizes safe
control policies as the combination of assume-guarantee contracts for
vehicles and the \Runtime. This result is then specialized by showing
how policies consistent with their respective contracts can be defined
for vehicles and the \Runtime. In particular, for vehicles, we provide
a principle for defining speed policies and, for the \Runtime, we
compute free space policies that conform to a set of traffic rules.
The results are general and overcome the limitations of
a posteriori verification. They can be applied to \ADS\ involving a
dynamically changing number of vehicles. In addition, they rely on a general
map-based environment model, which has been extensively studied in
\cite{abs-2109-06478}. Control policies for vehicles and the runtime can be
implemented efficiently. In particular, the speed policy has been
tested in various implementations \cite{WangLS20,abs-2103-15484} and
found to be not only safe, but also closer to the optimum when
refining the space of possible accelerations.

Note that the results can be
extended with slight modifications to maps where the segments are
curves or regions to express traffic rules involving properties of
two-dimensional space, for example for passing maneuvers.  For
example, if we consider region maps, their segments will be regions of
constant width centered on curves. Itineraries, free spaces and $B(v)$
will be regions. The relationship $B(v)\le f$ becomes $B(v)\subseteq
f$ and the addition of lengths of segments should be replaced by the
disjoint union of the regions they represent. The speed control policy
will remain unchanged in principle but will need a function computing
the distance travelled in a region. Finally, the runtime verification
of the disjointness of free spaces may incur a computational cost
depending on the accuracy of the region representation.

The presented results provide a basis for promising developments in several
directions. One direction is to extend the results to achieve correctness
by design for general properties.  We have shown that traffic rules, which
are declarative properties of vehicles, can be abstracted into safety
constraints on free spaces.  In this way, we solved a simple synthesis
problem by transforming a “static” constraint on vehicle speed into a
“dynamic” constraint on shared resources.

An interesting question that should be further investigated, is
whether the method can be extended to more general properties
involving the joint obligation of many vehicles. For example, we can
require that for any pair of vehicles $c_1$ and $c_2$ that are
sufficiently close, the absolute value of the difference between their
speeds is less than a constant $k$, i.e., $|\speed{c_1}-\speed{c_2}|
\le k$. This can be achieved by a free space constraint that gives
more free space to the vehicle with the lower speed, assuming that
vehicle speed policies are not "lazy" and use as soon as possible the
available space.

For general properties involving more than one vehicle, it seems realistic
to translate them directly into free space constraints that will enforce
the constraints processed by the \Runtime\ to ensure the safe control
policy. In particular, in addition to safety properties, we could devise
free space policies that optimize criteria such as road occupancy and
uniform separation for a given group of vehicles e.g. platoon systems
studied in \cite{El-HokayemBBS20}.  Note that achieving non-blocking
control is such a property that involves the application of occupancy
criteria. 

Another direction is to move from centralized to
distributed coordination with many runtimes. It seems possible to
partition traffic rules according to the geometric scope of their
application, e.g., a specific runtime could control access to each
junction.  Finally, the \Runtime\ can be used as a monitor to
verify that the vehicle speed policies of an \ADS\ are safe and
respect the given traffic rules.

\bibliographystyle{splncs04}
\bibliography{biblio-compact}

\end{document}